\RequirePackage{etoolbox}
\csdef{input@path}{%
 {sty/},
 {img/},
 {bib/}
}

\documentclass[numbers,compress]{vmsta2}

\volume{}
\issue{6}
\pubyear{}
\articletype{research-article}

\newtheorem{thm}{Theorem}
\newtheorem{remark}{Remark}

\newtheorem{cor}{Corollary}
\newtheorem{prop}{Proposition}

\theoremstyle{definition}

\usepackage{color}

\newcommand{\be}{\begin{equation}}
\newcommand{\ee}{\end{equation}}

\newcommand{\PP}{\mathbb{P}}
\newcommand{\RR}{\mathbb{R}}
\newcommand{\NN}{\mathbb{N}}
\newcommand{\CC}{\mathbb{C}}

\newcommand{\lLambda}{\widetilde{\Lambda}}

\newcommand{\lPsi}{\widetilde{\Psi}}
\newcommand{\lpsi}{\widetilde{\psi}}
\newcommand{\lPsiM}{\widetilde{\Psi}_{\rm{M}}}
\newcommand{\lpsiM}{\widetilde{\psi}_{\rm{M}}}
\newcommand{\e}{{\rm e}}

\newcommand{\LambdaM}{{\Lambda_{\rm{M}}}}
\newcommand{\lLambdaM}{\widetilde{\Lambda}_{\rm{M}}}

\definecolor{gp}{rgb}{0.0,0.0,0.0} 
\newcommand{\gp}[1]{{\textcolor{gp}{#1}}}
\definecolor{ck}{rgb}{0.0,0.0,0.0} 
\newcommand{\ck}[1]{{\textcolor{ck}{#1}}}

\begin{document}
\begin{frontmatter}

\pretitle{Research Article}

\title{First-return time in fractional kinetics}

\begin{aug}
\author[a]{\inits{M.}\fnms{Marcus}~\snm{Dahlenburg}
}
\author[a,b]{\inits{G.}\fnms{Gianni}~\snm{Pagnini}\thanksref{cor2}
\ead[label=e2]{gpagnini@bcamath.org}\orcid{0000-0001-9917-4614}}
\thankstext[type=corresp,id=cor2]{Corresponding author.}

\address[a]{
\institution{BCAM} -- Basque Center for Applied Mathematics,\\
Alameda de Mazarredo 14, 48009 Bilbao, 
Basque Country -- \cny{Spain}}

\address[b]{\institution{Ikerbasque} -- Basque Foundation for Science,\\
Plaza Euskadi 5, 48009 Bilbao, Basque Country -- \cny{Spain}}


\markboth{M. Dahlenburg, G. Pagnini}{First-return time in fractional kinetics}
\end{aug}

\begin{abstract}
The first-return time 
is the time that it takes \gp{a random walker to go back to the 
initial position for the first time}. 
We study the first-return time when random walkers perform
fractional kinetics, specifically fractional diffusion,
that is modelled within the framework of the
continuous-time random walk on homogeneous space in the uncoupled formulation
\gp{with Mittag--Leffler distributed waiting-times}. 
We consider both Markovian and non-Markovian \gp{settings,} as well as
any kind of symmetric jump-size distributions, 
namely with finite or infinite variance. 
We show that the first-return time density is indeed independent
of the jump-size distribution when it is \gp{symmetric,}
and therefore it is affected only by 
the waiting-time distribution that embodies the memory of the process.
We perform our analysis \gp{in two cases}: 
{\it first jump then wait} and {\it first wait then jump},
and we provide \gp{several exact results,} including 
the relation between results in the 
\gp{Markovian and non-Markovian settings and the} difference between the two cases.
\end{abstract}

\begin{keywords}
\kwd{First-return time}
\kwd{fractional kinetics}
\kwd{fractional diffusion}
\kwd{continuous-time random walk}
\kwd{first-passage time}
\kwd{Sparre Andersen theorem}
\end{keywords}

\begin{keywords}[MSC2020]%
\kwd{82C05}
\kwd{60K50}
\kwd{60J76}
\end{keywords}

\end{frontmatter}

\section{Introduction}
\label{sec:intro}
Fractional kinetics \cite{sokolov_etal-pt-2002,klafter_etal-pw-2005} 
concerns particles' motion 
whose random displacements follow  
a probability density function that evolves
according to equations including fractional calculus operators.
In a sloppy way, we can say that fractional kinetics concerns
processes that follow a 
non-local generalisation of the Fokker--Planck equation.
A noteworthy example is fractional diffusion \cite{mainardi_etal-fcaa-2001}.

The emerging evolution equation in fractional diffusion
generalises the heat diffusion equation by replacing 
the first derivative in time and/or
the second derivative in space by the corresponding fractional derivatives
\cite{gorenflo_etal-cism-1997}.
As a consequence,
the underlying random walk model is not the classical random walk
leading to the Brownian motion but its generalisation 
characterised by power-law distributions. 
Such generalised random walk is the continuous-time random walk (CTRW)
\cite{montroll_etal-jmp-1965} 
that extends classical random walk by replacing a fixed time-step
with a random interval between consecutive jumps.
These random intervals are called waiting-times and,
in the basic formulation \cite{montroll_etal-jmp-1965}, 
are assumed to be i.i.d. random variables and independent
of the jump-sizes, which also are assumed to be i.i.d. random variables.
\ck{When the distribution of the waiting-times displays a power-law,
then the evolution equation of the displacement density function 
contains time-fractional derivatives. 
On the other side, when the
distribution of the jump-sizes displays a power-law, 
then such evolution equation contains space-fractional derivatives.} 
In this second case, the corresponding random walk is
called L\'evy flight \cite{chechkin_etal-anotrans-2008}.  
When the jump-size distribution displays power-law then
the variance of the jumps and that of the walker's density function
are infinite.
When the waiting-time distribution displays a power-law and 
time-fractional derivatives appear, then the process is 
non-Markovian. The process is striclty Markovian when
the waiting-time distribution is exponential 
\cite{zwanzig-jsp-1983,mainardi_etal-pa-2000},
by the way this constrain can be relaxed and consider as
Markovian all those processes with finite mean waiting-times
and non-Markovian those with infinite mean waiting-times.
Here, we study the first-return time when both 
waiting-time and jump-size distribution may display power-law. 

The first-return time is the time that it takes a random walker
to come back for the first time to the starting location.
This kind of problem has a quite important application in the
animal kingdom because it is linked to concepts as
site fidelity, breeding, social associations, optimal foraging 
\cite{greenwood-ab-1980}.
Mathematically, it is strongly related with the first-passage time problem
\cite{redner-2001}.
The main result in first-passage time problems is
the celebrated Sparre Andersen theorem \cite{sparre-andersen-1954b},
which concerns the survival probability on the half-line
\ck{conditioned on the walker's initial position}
\gp{for symmetric Markovian random walks with fixed time-steps
in the presence of an absorbing boundary.}
\ck{Its importance lays on the fact that it states that
the conditional survival probability 
is independent of the jump-size distribution 
when the walker's trajectory starts from the location
of the absorbing boundary, thus, it is universal.} 
As a matter of fact, we show in this paper 
that this universality of first-passage time problems
is reflected also in first-return time problems. 
Therefore, the results \gp{derived here} are indeed
independent of the tails of the walkers' density function
and so the process can be governed by a Gaussian or by a L\'evy stable density,
as well.  
On the other side, the system is affected indeed by memory effects.
Therefore, the results display different behaviour between 
Markovian\ck{, namely with exponentially distributed waiting-times
\cite{zwanzig-jsp-1983,mainardi_etal-pa-2000},}
and non-Markovian frameworks. 

In particular, 
we report here that in discrete-time random walks,
which are Markovian by construction,
when a nearest-neighbour jump-law is taken into account 
then the probability mass function of the first-return time 
is linearly proportional to that of the first-passage time
\cite{redner-2001,kostinski_etal-ajp-2016,gruda_etal-jsm-2025}.
As it is discussed in the following, see formula \eqref{discrete},
this linear proportionality holds also 
when the assumed jump-law is different of 
the nearest-neighbour rule.
However, since the general framework of the CTRW is \gp{considered here},
we study in addition the \gp{specific non-Markovian setting 
generated by Mittag--Leffler distributed waiting-times}
and we determine and investigate also the difference 
emerging between Markovian and non-Markovian formulations.

The formulation of the CTRW is dependent on the choice about
the starting instant for measuring the elapsed time.
In fact, the mesurement of the duration of the first-return time 
can start synchronised with the first jump,
and we label this case as the case {\it first jump then wait}, 
or it can start independently of the process such that
the first jump occurs after the first random waiting-time,
and we label this case as the case {\it first wait then jump}. 
We consider both cases, because, 
even if formulae relating Markovian and non-Markovian frameworks are equal, 
exact results in the Markovian setting differ between the two cases,
and so in the non-Markovian setting\gp{. Moreover,}
we also quantify this difference.

We derive exact results in the Markovian setting and
we provide the relation between the Markovian and the non-Markovian setting
such that non-Markovian results 
\gp{for Mittag--Leffler distributed waiting-times}
are also exact and fully determined.
The derivation of the result is based on the Sparre Andersen theorem
and it exploits the Laplace transform method. 
In particular, we provide some preliminary notions and results in
Section \ref{sec:preliminaries} where the Sparre Andersen theorem 
is reminded and we also briefly remind the theory of 
CTRW \cite{montroll_etal-jmp-1965}.
This allows us to state the Sparre Andersen theorem in the framework of
the CTRW.
In this section, 
we also derive integral \eqref{SAintegral},
that is the main formula for the calculation of the remaining results.
In Section \ref{sec:FRT}, first we formulate the problem of
first-return time and later we derive exact results both in
the case {\it first jump then wait} and {\it first wait then jump}.  

\section{Preliminaries}
\label{sec:preliminaries}
\subsection{The Sparre Andersen theorem}
Let $\mathbb{R}$ be the set of real numbers, 
we denote by $\mathbb{R}^+$ and by $\mathbb{R}_0^+$
the set of positive real numbers 
and the set of non-negative real numbers, e.g.,
the half-line of spatial excursions excluded the origin
$\mathbb{R}^+ = \{x \in\mathbb{R}|x > 0\}$ and 
the elapsed times
$\mathbb{R}_0^+ = \{t \in \mathbb{R}| t \ge 0\}$.
\ck{Analogously, we denote by $\mathbb{R}^-$ and 
by $\mathbb{R}_0^-$ the set of negative real numbers 
and the set of non-positive real numbers.} Let furthermore 
$\mathbb{N}$ be the set of natural numbers and $\NN_0=\NN\cup\{0\}$.
Let $S_n=X_1 + \dots + X_n$ be the sum of $n\in \NN$ 
i.i.d. random variables $X_n \in \RR$.
Then the first ladder epoch 
$\{\mathcal{T}=n\}=\{S_1 \ge 0, \dots, S_{n-1} \ge 0, S_n < 0\}$ 
is the epoch of the first entry of the walker
into the negative semi-axis $\RR^-$.
By adding a constant $x_0\in\RR_0^+$ to all terms, 
we obtain a random walk with initial position at $x_0$.  
The probability for a walker started at $S_0=x_0$ to remain
in the initial half-axis after $n$ steps,
that is the probability for $\mathcal{T}$ to be larger than $n$, 
is called survival probability \ck{conditioned on the initial position}
and the site $x=0$ is termed as the location of an absorbing barrier.
Thus, \gp{let the survival probability 
conditioned on the initial position
be denoted by $\phi_n:\RR_0^+\to[0,1]\,,\forall n\in\NN$,
then for a walker started at $S_0=x_0$
we have $\phi_n(x_0)=\PP(\mathcal{T} > n)$} and it yields
\be
\int_0^{+\infty} k(z-x_0) \, \phi_n(z) \, dz = \phi_{n+1}(x_0) \,,
\quad x_0 \in \RR_0^+\,, \quad n \in \NN_0 \,,
\label{WH}
\ee
with initial condition $\phi_0(\xi)=1\,,\forall\,\xi \in \RR_0^+$, 
where $k: \RR \to \RR_0^+$ is the probability density function of the 
i.i.d. random variables $X_n$, $\forall \, n\in\NN$. 
Within the framework of the CTRW,
the density function $k(x)$ \gp{turns into} the jump-size distribution.

Integral equation \eqref{WH} is known as Wiener--Hopf  
equation \cite{estrada_kanwal-2000}. 
A first connection between random walks with an absorbing barrier and
Wiener--Hopf integrals was pointed out by D.V. Lindley in 1952 
\cite{lindley-1952}. 
For a more general relations between
Wiener--Hopf integrals and probabilistic problems,
see, e.g., \ck{reference} \cite[Sections XII.3a and XVIII.3]{feller-1971}.
However, as \ck{observed} by W. Feller,
the connections between first-passage problems and the
Wiener--Hopf integrals {\it "are not as close as they are usually made
to appear"} \cite[Introduction to Chapter XII]{feller-1971}.
For more recent papers, see, e.g., references 
\cite{ivanov-aa-1994,
frisch_etal-1994,
majumdar_etal-jsp-2006,
majumdar-pa-2010,
bray_etal-ap-2013,
majumdar_etal-jpa-2017,dahlenburg_etal-jpa-2022}. 
A general solution to the Wiener--Hopf equation \eqref{WH}
is known in \ck{the} literature
with the name of Pollaczek--Spitzer formula \cite{bray_etal-ap-2013}.
This name refers to a formula, 
see \cite[theorem 5, formula (4.6)]{spitzer-1957} and
\cite[formula (12)]{bray_etal-ap-2013}, 
derived by F. Spitzer in 1957 \cite[theorem 3, formula (3.1)]{spitzer-1957}
on the basis of an auxiliary formula obtained by F. Pollaczek in 1952 
\cite[formula (8)]{pollaczek-1952} but through a different method. 
It is possible to show that problem \eqref{WH} is
equivalent to a Sturm--Liouville problem with proper 
boundary conditions \cite{dahlenburg_etal-prsa-2023},
and this is also an alternative and easier approach 
with respect to the Pollaczek--Spitzer formula for the calculation  
of the survival probability.
In particular, one of the \gp{required} boundary conditions 
for the Sturm--Liouville system emerged to be given by the 
Sparre Andersen theorem.
Moreover, 
the Sparre Andersen theorem can be derived 
on the basis of formula \eqref{WH} 
under the assumption that $\phi_{n+1}(\cdot)$ 
is of bounded variation \cite{frisch_etal-1975}.
Actually, the Sparre Andersen theorem  
\cite{sparre-andersen-1954b} is a fundamental result
in the study of the first-passage time problems.

Originally established in 1954 
\cite{sparre-andersen-1954b}, the Sparre Andersen theorem states
\be
\phi_n(0) 
= 2^{-2n} \binom{2n}{n}
\sim \frac{1}{\sqrt{n \pi}} \,, \quad n \to \infty \,.
\label{SA}
\ee
Thereby, the Sparre Andersen formula \eqref{SA} 
is valid for arbitrary but symmetric
jump-size distribution, i.e., $k(\xi)=k(-\xi)$ with $\xi\in\RR$.
In other words,
the Sparre Andersen theorem \cite{sparre-andersen-1954b}
provides the exact survival probability 
\ck{conditioned on the initial position}
for symmetric Markovian random walkers starting in the same location
of the absorbing barrier and states that
it is independent of the jump-size distribution $k(x)$
(whenever it is assumed to be symmetric). 
See, for example, F. Spitzer \cite{spitzer-1956} and 
W. Feller \cite[Section XII.7]{feller-1971}.
This independency uncovers a universal nature of the Sparre Andersen theorem
that is reflected also in other results, see, e.g., 
\cite{chechkin_etal-jpa-2003,koren_etal-pa-2007,koren_etal-prl-2007,
majumdar-pa-2010,dybiec_etal-jpa-2016,padash_etal-jpa-2019,
lacroix_etal-jpa-2020}.
Thus, we refer to this independency as
the {\it universality} of the Sparre Andersen theorem. 
Sparre Andersen formula \eqref{SA} is indeed one of the many results 
emerging from a deep {\it corpus studiorum} on random walks 
based on combinatorial arguments 
by E. Sparre Andersen 
\cite{sparre-andersen-1949,
sparre-andersen-1953, 
sparre-andersen-1954a, 
sparre-andersen-1954b} 
and F. Spitzer \cite{spitzer-1956,spitzer-1957,spitzer-1960}.
For more recent and general results
on the basis of probabilistic arguments, see, e.g., 
reference \cite{aurzada_etal-2015}. 
The general formulation of the first-passage time problem
for processes with discrete-time is available 
\gp{in the literature}
\cite{bray_etal-ap-2013,majumdar_etal-jpa-2017}.

\subsection{From the discrete- to the continuous-time setting}
\label{sec:DtoC}
\gp{The survival probability problem can be formulated also in
the continuous-time setting.}
\gp{In particular, we consider the CTRW
on homogeneous and continuous space in the uncoupled formulation
as originally introduced by Montroll \& Weiss in the year 1965 
\cite{montroll_etal-jmp-1965}.}
\ck{This means that  
the i.i.d. random waiting-times and the i.i.d. random jump-sizes are, 
at any epoch, independent of each other and also 
of the current position and time.}
\gp{Therefore, in analogy with the discrete-time setting,
the walker's position after $n$ iterations,
with $n\in \NN$, is given by the sum of $n$ 
i.i.d. random variables $X_n \in \RR$ distributed according to the
jump-size distribution $k(x)$, i.e., $S_n=x_0 + X_1 + \dots + X_n$
where $x_0 \in \RR_0^+$ is the initial position.
The actual time $t \ge 0$ is given
by the sum of $n$ i.i.d. random waiting-times $\tau_j \in \RR_0^+$ 
between two consecutive jumps, i.e.,
$t=\sum_{j=1}^n \tau_j$ with initial instant $t=0$.}
\gp{Moreover, let the survival probability 
conditioned on the initial position in the continuous-time setting 
be denoted by $\Lambda:\RR_0^+\times\RR_0^+\to[0,1]$.
Thus, by using the same approach adopted to compute 
the walker's density function in the CTRW theory
\cite{montroll_etal-jmp-1965}, 
the conditional survival probability $\Lambda$
is given by the weighted superposition of all
the possible discrete-time counterparts $\phi_n(\cdot)$ 
and it results in the series}
\be
\Lambda(x_0,t)=\sum_{n=0}^\infty
\phi_n(x_0) \, \Psi_n(t) \,, \quad x_0,t\in\RR_0^+\,,
\label{discrete_continuous_Lambda}
\ee
where the index $n$ counts the number of occurred jumps
and $\Psi_n(\cdot)$ 
is the probability to have an elapsed time 
equals to $t\in\RR_0^+$ after $n\in\NN_0$ jumps such that
\be
\Psi_n(t) = \int_0^t \Psi_{n-1}(t-\tau) \psi(\tau) \, d\tau \,,\quad
\Psi_0(t)=\Psi(t)=1-\int_0^t \psi(\tau) d\tau \,,
\label{def:Psi}
\ee
\gp{and} $\psi : \RR_0^+ \to \RR_0^+$
is the distribution of the waiting-times.

For the present purposes, 
we pass to the Laplace domain 
and we obtain for formula \eqref{discrete_continuous_Lambda}
\be
\mathcal{L}\{ \Lambda(x_0,t);s \}=\lLambda(x_0,s)
=\lPsi(s) \sum_{n=0}^\infty
\phi_n(x_0) \, \left[\lpsi(s)\right]^n \,,
\label{lambdactrw}
\ee
where $\displaystyle{
\mathcal{L}\{g(t);s\}:=\widetilde{g}(s)=\int_0^{+\infty}\,\e^{-st}\,g(t)\,dt}$
is the Laplace transform of a sufficiently well-behaving function $g(t)$.
Hence, after splitting formula \eqref{lambdactrw} in
\be
\lLambda(x_0,s)=
\lPsi(s) \phi_0(x_0) + 
\lPsi(s) \sum_{n=1}^\infty
\phi_n(x_0) \, \left[\lpsi(s)\right]^n\,,
\label{eq:intermediate}
\ee
by recalling that $\phi_0(x_0)=1\,,\forall\,x_0\in\RR_0^+\,,$ and 
by expressing $\phi_n(\cdot)$ through the 
Wiener--Hopf integral equation \eqref{WH},
from formula \eqref{eq:intermediate} it results that
\be
\lLambda(x_0,s)=\lPsi(s) + \lpsi(s)
\int_0^{+\infty} k(\xi-x_0) \lLambda(\xi,s) \, d\xi \,.
\label{WHctrw}
\ee
Finally, after the inverse Laplace 
transformation $g(t)=\mathcal{L}^{-1}\{\widetilde{g}(s);t\}$, 
the equation for the survival probability is
\be
\Lambda(x_0,t) = \Psi(t) + 
\int_0^t
\psi(t-\zeta) 
\int_{0}^{+\infty}
k(\xi-x_0) \Lambda(\xi,\zeta) \, d\xi d\zeta \,, \quad x_0,t\in\RR_0^+\,,
\label{eq:Lambda}
\ee
that is the analogue with continuous-time of formula \eqref{WH},
see also \ck{references} 
\cite{dahlenburg_etal-jpa-2022,dahlenburg_etal-prsa-2023,shkilev-pre-2024}, 
and we recover $\Lambda(x_0,0)=1\,,\forall\,x_0\in\RR_0^+\,,$ because
by definition \eqref{def:Psi} it holds $\Psi(0)=1$.

\subsection{The Sparre Andersen theorem in the framework of the CTRW}
The Sparre Andersen theorem can also be put in the framework
of the uncoupled CTRW,
in particular, through formula \eqref{discrete_continuous_Lambda}.
In fact, \gp{if we set $x_0=0$} in formula \eqref{lambdactrw}
and from the Sparre Anderson formula \eqref{SA} for the discrete-time setting
we take the expression of $\phi_n(0)$, 
then 
\begin{eqnarray}
\widetilde{\Lambda}(0,s)
&=& \widetilde\Psi(s) \, \sum^{\infty}_{n=0} 2^{-2n} \binom{2n}{n} \,
\left[ \widetilde\psi(s) \right]^n 
\nonumber\\
&=& \frac{\widetilde\Psi(s)}{\sqrt{1-\widetilde{\psi}(s)}} 
= \frac{\sqrt{s \, \widetilde\Psi(s)}}{s} \nonumber \\
&=&\frac{1}{s} \sqrt{1-\widetilde{\psi}(s)} \,,
\label{SA_non_Mark2}
\end{eqnarray}
where in the second line we used the series
\be
\sum_{n=0}^\infty 2^{-2n} \binom{2n}{n} z^n = \frac{1}{\sqrt{1-z}} \,,
\quad |z| < 1 \,.\label{ID_sum}
\ee
Formula \eqref{SA_non_Mark2} was already reported to some extent
\cite[formula (7)]{artuso_etal-pre-2014} but not formally derived, yet.
By applying the initial and final value theorems we have
\begin{subequations}
\be
\Lambda(0,0)=\lim_{s \to \infty} s \widetilde{\Lambda}(0,s) = 1 \,,
\ee
\be
\Lambda(0,\infty)=\lim_{s \to 0} s \widetilde{\Lambda}(0,s) = 0 \,,
\ee
\end{subequations}
because $\widetilde{\psi}(0)=1$ and $\widetilde{\psi}(\infty)=0$.
After the Laplace anti-transformation of the first line of formula
\eqref{SA_non_Mark2},
the Sparre Andersen theorem reads
\be
\Lambda(0,t) = \sum^{\infty}_ {n=0} 2^{-2n} \binom{2n}{n} \, \Psi_n(t) \,.
\label{SAnonM}
\ee

For Markovian random walks, 
we have that $\psi(t)=\psi_{\rm M}(t)=\e^{-t}$ 
\cite{zwanzig-jsp-1983,mainardi_etal-pa-2000}, such that 
\be
\lpsiM(s)=\lPsiM(s)=\frac{1}{1+s} \,.
\label{def:Markov}
\ee
and then formula \eqref{SA_non_Mark2} reads 
\be
\lLambdaM(0,s) = \frac{1}{\sqrt{s(1+s)}} \,.
\label{SActrw}
\ee
\ck{Consequently,} formula \eqref{SAnonM} for the Markovian case 
gives in the original domain 
\cite{shkilev-pre-2024}
\be
\LambdaM(0,t) = \e^{-t/2} \, I_0(t/2) \,, 
\label{SActrw2}
\ee
where $I_0(\cdot)$
is the modified Bessel function of the first kind of order $0$
defined by the series
\be
I_0(\zeta)=\sum_{j=0}^{\infty}\frac{(\zeta^2/4)^j}{(j!)^2} \,,
\quad \zeta \in \RR_0^+ \,,
\ee 
with
\be
I_0(0)=1 \,, \quad
\left. \frac{dI_0(\zeta)}{d\zeta}\right|_{\zeta=0}=0 \,, \quad
\left. \frac{d^2I_0(\zeta)}{d\zeta^2}\right|_{\zeta=0}=\frac{1}{2} \,.
\label{bessel0}
\ee
From 
formula \eqref{SActrw} we have in the long-time limit, i.e., $s \to 0$, 
that $\lLambdaM(0,s) \sim 1/\sqrt{s}$ and after the Laplace inversion it gives
$\LambdaM(0,t) \sim  1/\sqrt{t}$ for $t \to +\infty$. 
This last limit is 
the continuous-time counter-part of limit \eqref{SA}. 

For non-Markovian random walks, 
\gp{we consider the specific model} \cite{hilfer_etal-pre-1995}
\be
\psi(t)= t^{\beta-1} E_{\beta,\beta}(-t^\beta) \,, \quad 
\Psi(t)= E_{\beta,1}(-t^\beta)= E_\beta(-t^\beta) \,,
\label{def:nonMarkov2}
\ee
where $E_{\beta,\alpha}(z)$ is the Mittag--Leffler function 
\cite[Appendix E]{mainardi-2010}
\be
E_{\beta,\alpha}(z)=\sum_{j=0}^\infty \frac{z^j}{\Gamma(\beta j + \alpha)} \,,
\quad {\rm Re}\{\beta\} > 0 \,,
\quad \alpha \in \CC \,,
\quad z \in \CC \,,
\ee
that in the Laplace domain correspond to
\be
\lpsi(s)= \frac{1}{1+s^\beta} \,, \quad 
\lPsi(s)=\frac{s^{\beta-1}}{1+s^\beta} \,.
\label{def:nonMarkov}
\ee
\gp{Hence,} formula \eqref{SA_non_Mark2} reads 
\be
\lLambda(0,s) = \frac{s^{\beta/2-1}}{\sqrt{1+s^\beta}} \,.
\label{SActrwNM}
\ee
We observe that it holds
\be
\lLambda(0,s) = s^{\beta-1} \lLambdaM(0,s^\beta) \,, 
\label{LLM}
\ee
and by applying the Efros formula \cite{efros-1935,wlodarski-1952},
i.e., 
\be
\mathcal{L}^{-1}\!\left\{v(s)\widetilde{\mathcal{W}}[q(s)]\right\}=
\int_0^\infty \mathcal{W}(\zeta) \, 
\mathcal{L}^{-1}\!\left\{v(s)\e^{-\zeta q(s)}\right\} \, d\zeta \,,
\label{efros}
\ee
with $v(s)=s^{\beta-1}$ and $q(s)=s^\beta$,
in the original domain we have 
\be
\Lambda(0,t) = t^{-\beta} \int_0^\infty
\e^{-\zeta/2} I_0(\zeta/2) 
M_\beta(\zeta/t^\beta) 
\, d\zeta \,, 
\label{SActrw2NM}
\ee
where $M_\beta(\cdot)$ is the Mainardi/Wright function 
defined by the series
\cite[Appendix F]{mainardi-2010}
\be
M_\nu(z)=\sum_{j=0}^\infty 
\frac{(-1)^j}{j!} \frac{z^j}{\Gamma[-\nu j + (1-\nu)]} \,,
\quad 0 < \nu < 1 \,, \quad z \in \CC \,,
\label{def:M}
\ee
whose Laplace transform is known in \ck{the} literature 
\cite[(4.26)]{mainardi_etal-fcaa-2001}.
From formula \eqref{SActrwNM} we have 
that $\lLambda(0,s) \sim s^{\beta/2-1}$, when $s \to 0$, 
then, after the Laplace inversion it holds
$\Lambda(0,t) \sim  1/t^{\beta/2}$ for $t \to +\infty$. 

\gp{Moreover, after reminding that the  
unconditional survival probability is determined by the integral
$\displaystyle{\int_0^\infty \rho(\xi)\Lambda(\xi,t) \, d\xi}$,
where $\rho:\RR_0^+ \to \RR_0^+$ is the
destribution of the initial position,
then we can state
\begin{thm}
\label{thm:1}
The unconditional survival probability of a symmetric CTRW is
independent of the jump-size distribution if this
is equal to the distribution of the initial position.
\end{thm}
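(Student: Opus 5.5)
The plan is to show that, when the initial position is drawn from $\rho$ with $\rho(\xi)=k(\xi)$ for $\xi\in\RR_0^+$ (suitably normalised on the half-line, i.e.\ $\rho(\xi)=2k(\xi)$ by symmetry), the unconditional survival probability reduces to a shifted version of the conditional survival probability started at the barrier. Concretely, I would write
\[
\int_0^\infty \rho(\xi)\,\Lambda(\xi,t)\,d\xi
\]
and compare it with the Wiener--Hopf structure already built into \eqref{eq:Lambda} at $x_0=0$. Setting $x_0=0$ in \eqref{eq:Lambda} gives
\[
\Lambda(0,t)=\Psi(t)+\int_0^t\psi(t-\zeta)\int_0^{+\infty}k(\xi)\,\Lambda(\xi,\zeta)\,d\xi\,d\zeta ,
\]
and the inner integral is exactly (up to the normalisation constant) the unconditional survival probability with initial density $\rho=k$ restricted to $\RR_0^+$. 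So the key identity is that the unconditional survival probability $U(t):=\int_0^\infty k(\xi)\Lambda(\xi,t)\,d\xi$ satisfies $\Lambda(0,t)=\Psi(t)+(\psi*U)(t)$.

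Next I pass to the Laplace domain, using \eqref{WHctrw} at $x_0=0$: $\lLambda(0,s)=\lPsi(s)+\lpsi(s)\,\widetilde U(s)$, so
\[
\widetilde U(s)=\frac{\lLambda(0,s)-\lPsi(s)}{\lpsi(s)} .
\]
Now I insert the universal expression \eqref{SA_non_Mark2}, $\lLambda(0,s)=s^{-1}\sqrt{1-\lpsi(s)}$, together with $\lPsi(s)=(1-\lpsi(s))/s$. This yields
\[
\widetilde U(s)=\frac{\sqrt{1-\lpsi(s)}-\bigl(1-\lpsi(s)\bigr)}{s\,\lpsi(s)},
\]
which depends only on the waiting-time distribution $\psi$ and not on $k$. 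Inverting the Laplace transform (uniquely) gives that $U(t)$, hence the unconditional survival probability with $\rho=2k|_{\RR_0^+}$, is independent of the jump-size distribution, proving the statement.

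The main obstacle is being careful about the step from \eqref{lambdactrw} to \eqref{WHctrw} at $x_0=0$: it relies on $\phi_{n+1}(0)=\int_0^\infty k(z)\phi_n(z)\,dz$ from \eqref{WH}, which is fine, and on the Sparre Andersen theorem \eqref{SA} holding for every symmetric $k$ (continuity of $k$ ensures no atom issues at the barrier). I would also state clearly the interpretation of ``equal to the distribution of the initial position'' as $\rho$ being the restriction of $k$ to the half-line renormalised by the factor $2$, which only rescales $U$ by a constant and so does not affect the universality conclusion. As a sanity check one can verify in the discrete-time version that $\sum_n\bigl(\int_0^\infty 2k\,\phi_n\bigr)z^n=2\bigl((1-z)^{-1/2}-1\bigr)/z$, consistent with the formula above.
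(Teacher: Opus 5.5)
Your proposal is correct and is essentially the paper's proof. Both set $x_0=0$ in \eqref{WHctrw} and substitute the Sparre Andersen expression \eqref{SA_non_Mark2} for $\lLambda(0,s)$, which gives $\int_0^\infty k(\xi)\lLambda(\xi,s)\,d\xi=\bigl[\sqrt{1-\lpsi(s)}-1+\lpsi(s)\bigr]/\bigl(s\,\lpsi(s)\bigr)$, i.e.\ formula \eqref{SAintegral}; the paper only routes the algebra through $\lPsi(s)=s\lLambda^2(0,s)$ instead of solving for the integral directly, and your remark that $\rho=2k$ on $\RR_0^+$ is needed for normalisation is a correct clarification the paper leaves implicit.
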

}
\begin{proof}
From the third line of formula \eqref{SA_non_Mark2} we have 
$\lPsi(s)=s \lLambda^2(0,s)$ and,
after summing and subtracting $\lLambda(0,s)$, we can derive
\be
\lLambda(0,s)=\lPsi(s)+\lLambda(0,s)\left[1-s\lLambda(0,s)\right] \,,
\ee
that, after the comparison against formula \eqref{WHctrw} with $x_0=0$, 
gives
\begin{eqnarray}
\int_0^\infty k(\xi) \lLambda(\xi,s) \, d\xi 
&=&\frac{\lLambda(0,s)}{\lpsi(s)}\left[1-s\lLambda(0,s)\right] \nonumber \\
&=&\frac{\lPsi(s)}{\lpsi(s)}\left[
\frac{1}{\sqrt{1-\lpsi(s)}} - 1 \right] \nonumber \\
&=&\frac{\sqrt{1-\lpsi(s)}-1+\lpsi(s)}{s \,\lpsi(s)} \,,
\label{SAintegral}
\end{eqnarray}
that is independent of the jump-size distribution $k(x)$
as a consequence of the universality of the 
Sparre Andersen theorem \eqref{SA} when plugged into 
\eqref{discrete_continuous_Lambda}. 
\end{proof}

\section{First-return time for CTRW}
\label{sec:FRT}
\subsection{Problem formulation and definitions}
The duration of excursions of a random walker 
\gp{until its first comeback to the starting position} is called
first-return time (FRT).
This observable can be understood with an example from
the animal kingdom in terms of the 
return of an animal to a previously occupied area
as the duration of the flight of birds 
when they leave and then return to their nests. 
This is a wide-spread behaviour associated
with a number of ecological processes as site fidelity, 
breeding, social associations, optimal foraging \cite{greenwood-ab-1980}
and successfully modelled also by fractional kinetics 
\cite{giuggioli_etal-jmb-2012,zeng_etal-fcaa-2014}.

\gp{The FRT for the uncouple CTRW reported at the beginning
of Subsection \S \ref{sec:DtoC}
can be exactly calculated by using integral formula \eqref{SAintegral}.
In particular, we can exactly calculate
the FRT for the fractional kinetics emerging
from a CTRW with Mittag--Leffler distributed waiting-times 
\eqref{def:nonMarkov2}.
In this Section, 
we consider a random walk starting at the origin $x=0$ and,}
provided that the first jump away is long $\xi$,
we are interested in how much time it takes
to the random walker to pass through the origin
for the first time after the departure.
As a matter of fact, 
\gp{the problem is equivalent to a first-passage time problem
that starts at the initial instant $\tau_0$
from the random first-landing site $\xi$} 
and the associated absorbing barrier is located in the origin $x=0$. 
Actually, \ck{the first-landing} site $\xi$ is 
distributed according to the jump-size distribution $k(\xi)$
and $\tau_0$ is the delay elapsed between the two locations $x=0$ and $x=\xi$. 
The CTRW formalism allows for two different formulations
of the problem: 
{\it first jump then wait} (jw), such that $\tau_0=0$, and
{\it first wait then jump} (wj), 
such that $\tau_0$ is a random variable
distributed according to $\psi(t)$.
Thus, by \gp{following the literature} 
\cite{ding_etal-pre-1995,colaiori_etal-pre-2004,jeon_etal-book-2014},
we define the FRT as follows:
\begin{equation}
\text{FRT} \equiv
\begin{cases}
\mathcal{T}_{\ell}(\xi) + \tau_0 \,; & \xi\in\mathbb{R}^- \,,\\
& \\
\mathcal{T}_{r}(\xi) + \tau_0 \,; &\xi\in\mathbb{R}^+ \,,
\end{cases}
\label{eq2}
\end{equation}
where $\mathcal{T}$ is the first-passage time and,
in particular, $\mathcal{T}_{\ell}$ is the first-passage time 
when the first jump away from the origin
is towards the left into the position $\xi <0$ and  
$\mathcal{T}_{r}$ when it is towards the right into the position $\xi >0$. 

Let $\displaystyle{\lambda(\xi,t)=-\frac{\partial \Lambda}{\partial t}}$ 
be the first-passage time density 
conditioned to the starting position $\xi$ and 
satisfying the normalisation condition
$\displaystyle{\int_0^\infty \lambda(\xi,t) \, dt = 1}$,
then the density function of the first-passage time
weighted over all the possibile first-jump landing positions results to be 
\begin{eqnarray}
f(t)
&=&\int_{-\infty}^{+\infty} k(\xi)\lambda(\xi,t) \, d\xi \nonumber \\
&=& 2 \int_0^{+\infty} k(\xi)\lambda(\xi,t) \, d\xi \,,
\label{eq5}
\end{eqnarray}
where the symmetry property of the jump-size distribution 
$k(\xi)=k(-\xi)$ is used.

Thus, since $\mathcal{T}$ and $\tau_0$ are statistically independent,
we have that the density function of their sum,
namely the FRT \eqref{eq2}, is
\begin{subequations}
\be
\mathcal{P}^{\rm jw}(t)=f(t) \,, 
\quad {\rm when} \quad \tau_0=0 \,,
\label{Pjw}
\ee
\be
\mathcal{P}^{\rm wj}(t)
=\int_0^t \psi(t-\zeta) \, \mathcal{P}^{\rm jw}(\zeta) \, d\zeta \,,
\quad {\rm when} \quad \tau_0 \sim \psi(t) \,.
\label{Pwj}
\ee
\end{subequations}

An alternative formulation is possible by starting from the
discrete-time setting with the extension to the continuous-time setting 
in the same fashion of formula \eqref{discrete_continuous_Lambda}
\cite{artuso_etal-pre-2014}. 
In particular, by defining the discrete-time 
first-passage time density 
$\lambda_n(\xi) = \phi_{n-1}(\xi)-\phi_n(\xi)$,
with $n \ge 1$, and exploiting the Sparre Andersen theorem \eqref{SA}
at $\xi=0$, we have \cite{redner-2001}
\begin{eqnarray}
\lambda_n(0)
&=& 2^{-2(n-1)}\binom{2(n-1)}{n-1}-2^{-2n}\binom{2n}{n} \nonumber \\
&=& 2^{-2(n-1)}\binom{2n}{n} \left\{ \frac{n}{4n-2}-\frac{1}{4} \right\}
\nonumber \\
&=& 2^{-2(n-1)}\binom{2n}{n} \frac{1}{2(4n-2)} \nonumber \\
&=&\frac{2^{-2n}}{2n-1}\binom{2n}{n} \,,
\label{eq7}
\end{eqnarray}
that correctly gives $\sum_{n=1}^\infty\,\lambda_n(0)=1$.
Moreover, by following definition \eqref{eq5},
in the discrete-time setting we have 
\begin{eqnarray}
f_n
&=& 2 \int_0^\infty k(\xi) \lambda_n(\xi) \, d\xi \nonumber \\
&=& 2 \int_0^\infty k(\xi) \left[
\phi_{n-1}(\xi) - \phi_n(\xi) \right] \, d\xi \nonumber \\
&=& 2 \left[
\phi_{n}(0) - \phi_{n+1}(0) \right] = 2 \lambda_{n+1}(0) \,,
\quad n \ge 1 \,,
\label{discrete}
\end{eqnarray}
where in the third line we used \eqref{WH}, and it holds
$f_0=0$.

\subsection{The (jw)-case}
We remind that in this case $\mathcal{P}^{\rm jw}(t)=f(t)$
\eqref{Pjw}, thus in the following we study the density function $f(t)$.

In particular,
thanks to the relation 
in the Laplace domain between the 
first-passage time density and the corresponding survival probability,
i.e., \gp{$\widetilde{\lambda}(\xi,s)=1-s \, \widetilde{\Lambda}(\xi,s)$}, 
from \eqref{eq5} we have
\begin{eqnarray}
\widetilde{f}(s)
&=& 2 \int_0^{+\infty} k(\xi) \widetilde{\lambda}(\xi,s) \, d\xi
\nonumber \\
&=&1-2 s \int_{0}^\infty k(\xi)\widetilde{\Lambda}(\xi,s) \, d\xi 
\nonumber \\
&=&1 - 2 \, \frac{\sqrt{1-\widetilde{\psi}(s)}-1+\widetilde{\psi}(s)}
{\widetilde{\psi}(s)} \nonumber \\
&=&\frac{2-\widetilde{\psi}(s)-2\sqrt{1-\widetilde{\psi}(s)}}
{\widetilde{\psi}(s)} \,,
\label{eq6}
\end{eqnarray}
where formula \eqref{SAintegral} has been used in 
the third line. 
It can be checked that the normalisation condition $\widetilde{f}(0)=1$
is met because $\widetilde{\psi}(0)=1$.
\gp{We underline that,
since formula \eqref{SAintegral} from Theorem \ref{thm:1} has been used
for deriving \eqref{eq6}, 
we have the following
\begin{remark}
\label{rmk:1}
The density function of the FRT of a symmetric CTRW in the (jw)-case
is independent of the jump-size distribution.
\end{remark}}
Moreover, if we consider the non-Markovian setting \eqref{def:nonMarkov},
from formula \eqref{eq6}
we finally obtain that \ck{the Laplace transform of} 
the density function of the FRT for a CTRW 
(jw)-type is
\be
\widetilde{f}(s)=1+2s^\beta - 2 \sqrt{s^\beta}\sqrt{s^\beta+1} \,.
\label{ftilde}
\ee
When $\beta=1$ the system is Markovian and therefore, 
by denoting with $f_{\rm M}(t)$ the 
corresponding density of the FRT, 
the following equality in the Laplace domain holds
\be
\widetilde{f}(s)=\widetilde{f}_{\rm M}(s^\beta) \,.
\label{ftildefM}
\ee 
\gp{Hence, we can state
\begin{thm}
Let $f(t)$ and $f_{\rm M}(t)$ be the density functions of the FRT
in the $(jw)$-case for a symmetric CTRW 
with Mittag--Leffler \eqref{def:nonMarkov2} and 
exponentially distributed waiting-times, respectively, 
then it holds
\be
f(t)=\int_0^\infty 
\zeta^{-1/\beta} \ell_\beta(t/\zeta^{1/\beta}) f_{\rm M}(\zeta) \, d\zeta \,,
\quad 0 < \beta < 1 \,,
\label{f}
\ee
where $\ell_\beta(t)$ is the one-sided L\'evy stable density
characterised by the Laplace transform $\displaystyle{\e^{-s^\beta}}$.
\end{thm}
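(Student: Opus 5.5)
The plan is to start from the Laplace-domain identity \eqref{ftildefM}, namely $\widetilde{f}(s)=\widetilde{f}_{\rm M}(s^\beta)$, and invert it using the Efros formula \eqref{efros} with $v(s)=1$ and $q(s)=s^\beta$. Compared with the derivation of \eqref{SActrw2NM}, where the prefactor was $s^{\beta-1}$ and produced the Mainardi function $M_\beta$, the prefactor here is trivial. The kernel should therefore be the one-sided L\'evy stable density rather than $M_\beta$.

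First, I would write the Markovian transform explicitly as
\[
\widetilde{f}_{\rm M}(s)=\int_0^\infty \e^{-s\zeta} f_{\rm M}(\zeta)\,d\zeta ,
\]
which is the $\beta=1$ case of \eqref{ftilde}: $\widetilde f_{\rm M}(s)=1+2s-2\sqrt{s(s+1)}$. This function is bounded on $s>0$, and its value at $s=0$ is $1$. Substituting $s\mapsto s^\beta$ gives
\[
\widetilde{f}(s)=\int_0^\infty \e^{-\zeta s^\beta} f_{\rm M}(\zeta)\,d\zeta .
\]

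Second, I would invert term by term inside the integral, which is exactly what \eqref{efros} asserts. This needs the inverse transform of $\e^{-\zeta s^\beta}$ for fixed $\zeta>0$. By definition, $\ell_\beta$ has Laplace transform $\e^{-s^\beta}$. The scaling rule $\mathcal{L}\{c^{-1}g(t/c);s\}=\widetilde g(cs)$ with $c=\zeta^{1/\beta}$ gives $\widetilde g(cs)=\e^{-\zeta s^\beta}$. Hence
\[
\mathcal{L}^{-1}\{\e^{-\zeta s^\beta};t\}=\zeta^{-1/\beta}\ell_\beta(t/\zeta^{1/\beta}),
\]
and substituting this into the Efros formula yields \eqref{f}.

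The main obstacle is justifying the interchange of the inverse Laplace transform with the $\zeta$-integral. I would handle it with Fubini–Tonelli on the forward direction. Define the right-hand side of \eqref{f} as a function $F(t)$. Its integrand is nonnegative, because $\ell_\beta\ge0$ for $0<\beta<1$ and $f_{\rm M}\ge0$ is a probability density. Tonelli's theorem then allows computing $\widetilde F(s)$ by swapping the integrals, which gives $\int_0^\infty \e^{-\zeta s^\beta} f_{\rm M}(\zeta)\,d\zeta=\widetilde f(s)$. Uniqueness of the Laplace transform then gives $F=f$ almost everywhere, and everywhere under continuity.

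One further point must be checked: $f_{\rm M}$ should be a genuine density with no atom at $\zeta=0$. The concern is the constant $1$ in $\widetilde f_{\rm M}(s)$. For large $s$,
\[
1+2s-2s\sqrt{1+1/s}=1+2s-2s\Bigl(1+\tfrac{1}{2s}-\tfrac{1}{8s^2}+\dots\Bigr)=\tfrac{1}{4s}+O(s^{-2})\to0 .
\]
So there is no delta at zero, and $f_{\rm M}$ is a bona fide density. The nonnegativity used in the Tonelli step is therefore valid.
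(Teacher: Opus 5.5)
Your proposal is correct and follows the paper's own proof: invert $\widetilde f(s)=\widetilde f_{\rm M}(s^\beta)$ from \eqref{ftildefM} using the Efros formula \eqref{efros} with $v(s)=1$ and $q(s)=s^\beta$. Your scaling computation of $\mathcal{L}^{-1}\{\e^{-\zeta s^\beta}\}$ and your Tonelli-plus-uniqueness argument (using $\ell_\beta\ge 0$, $f_{\rm M}\ge 0$ and the absence of an atom at $\zeta=0$) just make rigorous the interchange of integrals that the paper takes for granted.
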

}
\begin{proof}
By reminding the Efros theorem \eqref{efros}, 
formula \eqref{ftildefM} can be inverted
by setting $v(s)=1$ and $q(s)=s^{\beta}$ such that we have
the integral relation \eqref{f}.
\end{proof}
We observe that from \eqref{f} the Markovian case is recovered
when $\beta=1$ because $\ell_1(t)=\delta(t-1)$.

In the Markovian case with $\beta=1$, 
from \eqref{ftilde} we have
\begin{eqnarray}
\widetilde{f}_{\rm M}(s) 
&=& (1+s) \, \frac{1+2s-2\sqrt{s}\sqrt{s+1}}{1+s} \nonumber \\
&=& \frac{1}{1+s} + 2s \left[\frac{1}{1+s} - \frac{1}{\sqrt{s(1+s)}}\right] 
\nonumber \\
& & \qquad \qquad + s \, \left\{
\frac{1}{1+s} + 2s \left[\frac{1}{1+s} - \frac{1}{\sqrt{s(1+s)}}\right] \right\} 
\,.
\end{eqnarray}
\ck{Thus,} by remembering the properties of the Laplace transform,
and also the Laplace transform pair \eqref{SActrw}-\eqref{SActrw2},
we can calculate the exact result
\begin{eqnarray}
f_{\rm M}(t)
&=& \e^{-t/2}\left[
I_0(\zeta) - \frac{dI_0(\zeta)}{d\zeta}\right]_{\zeta=t/2} -\e^{-t} 
\nonumber \\
& & \qquad \qquad + \frac{d}{dt}\left\{
\e^{-t/2}\left[
I_0(\zeta) - \frac{dI_0(\zeta)}{d\zeta}\right]_{\zeta=t/2} -\e^{-t} 
\right\} \nonumber \\
&=& \frac{1}{2} \e^{-t/2}\left[
I_0(\zeta) - \frac{d^2I_0(\zeta)}{d\zeta^2}\right]_{\zeta=t/2}
\,,
\label{fexplicit}
\end{eqnarray} 
and then \gp{the integral} formula \eqref{f} is fully determined.

From \eqref{fexplicit}, by \gp{using the values given}
in \eqref{bessel0}, we have that
\be
f_{\rm M}(0)=\frac{1}{4} \,,
\ee
and, when plugged in \eqref{eq5}, 
it gives the following universal result independent of $k(\xi)$
\be
f_{\rm M}(0)
=\int_{-\infty}^{+\infty} k(\xi)\lambda_{\rm M}(\xi,0) \, d\xi=\frac{1}{4} \,,
\ee 
in opposition to the first-passage time density that 
is indeed dependent on $k(\xi)$ also at $t=0$ according to 
$\displaystyle{\lambda_{\rm M}(\xi,0)=\int_\xi^\infty k(y)dy}$.
In the non-Markovian case it holds
\be
f(0)=+\infty \,,
\ee
in analogy with the first-passage time density $\lambda(\xi,0)$.
In fact,
after the change of variable $\zeta=(t/\chi)^\beta$ in \eqref{f}, we have
\begin{eqnarray}
f(t)
&=& \beta t^{\beta - 1} \int_0^\infty \ell_\beta(\chi)f_{\rm M}(t^\beta/\chi^\beta)
\, \frac{d\chi}{\chi^\beta} \nonumber \\
&\sim& C \, t^{\beta-1} \,, \quad t \to 0 \,, \quad
{\rm with} \quad C=\frac{\beta}{4} \int_0^\infty \ell_\beta(\chi)\, \frac{d\chi}{\chi^\beta} 
\,.
\end{eqnarray}

From \eqref{ftilde} we can derive the asymptotic behaviour for
large elapsed times.
\ck{Actually,} we have $\widetilde{f}(s) \sim 1 - 2s^{\beta/2}$ when $s \to 0$
that is consistent with the behaviour of 
\gp{$\displaystyle{\e^{-2s^{\beta/2}}}$} for small $s$ and gives
\be
f(t) \sim \frac{1}{2^{2/\beta}} \,
\ell_{\beta/2}(t/2^{2/\beta}) \sim \frac{2}{t^{\beta/2+1}} \,,
\quad t \to +\infty \,.
\label{fasymptotic}
\ee
\ck{From \eqref{fasymptotic} it} follows that the mean FRT in the (jw)-case
is infinite both in Markovian and non-Markovian setting 
but the right tail of the density function decreases
slower in non-Markovian systems.
Because of formula \eqref{eq5},
we have that scaling \eqref{fasymptotic} is also the scaling 
of the corresponding first-passage time density 
$\lambda(\xi,t)$ \cite{jeon_etal-book-2014}, 
notwithstanding the exact functions may differ.

Tools of fractional calculus can be used for studying the 
integral formula \eqref{f}. In fact,
\gp{let} $J^\mu$, with $\mu > 0$, be the Riemann--Liouville fractional integral
defined by the Laplace symbol $s^{-\mu}$ \cite{gorenflo_etal-cism-1997},
then in the Laplace domain it holds
\be
\mathcal{L}\{J^{1-\beta}f(t);s\}=
\frac{\widetilde{f}(s)}{s^{1-\beta}}=
\int_0^\infty \frac{\e^{-\zeta s^\beta}}{s^{1-\beta}} 
f_{\rm M}(\zeta) \, d\zeta \,,
\ee
that, by using the formula 
$\displaystyle{\mathcal{L}\{t^{-\beta} M_\beta(\zeta t^{-\beta});s\}
=s^{\beta-1}\e^{-\zeta s^\beta}}$, with ${\rm Re}\{s\}>0$, 
\cite[(4.26)]{mainardi_etal-fcaa-2001}, gives
\be
J^{1-\beta}f(t)=\int_0^\infty 
\frac{1}{t^\beta}M_\beta(\zeta/t^{\beta}) f_{\rm M}(\zeta) \, d\zeta
\,. 
\label{Jf}
\ee
\ck{Since} from the study of time-fractional diffusion equations 
\cite{mainardi_etal-fcaa-2001}
we have that \linebreak[4] $\lim_{t \to 0} t^{-\beta}M_\beta(\zeta/t^{\beta}) = \delta(\zeta)$,
then \ck{it holds}
\be
\left. J^{1-\beta}f(t) \right|_{t=0}= f_{\rm M}(0) = \frac{1}{4} \,.
\label{Jf0}
\ee
Furthermore, 
we can also have the Laplace transform of the cumulative density function
$\displaystyle{F^{\rm jw}(t)=\int_0^t f(\zeta)d\zeta}$ that,
by using \eqref{ftildefM}, leads to the equality 
\be
\widetilde{F}^{\rm jw}(s)
=\frac{\widetilde{f}(s)}{s}
=s^{\beta-1} \frac{\widetilde{f}_{\rm M}(s^\beta)}{s^\beta}
=s^{\beta-1} \widetilde{F}^{\rm jw}_{\rm M}(s^\beta) \,.
\ee
\gp{Therefore, we can state the following
\begin{thm}
\label{thm:CDF}
Let $F^{\rm jw}(t)$ and $F^{\rm jw}_{\rm M}(t)$ 
be the cumulative density functions of the FRT in the $(jw)$-case
for a symmetric CTRW  
with Mittag--Leffler \eqref{def:nonMarkov2} and 
exponentially distributed waiting-times, respectively, 
then it holds
\be
F^{\rm jw}(t)=t^{-\beta}\int_0^\infty 
M_\beta(\zeta/t^\beta) F^{\rm jw}_{\rm M}(\zeta) \, d\zeta \,,
\quad 0 < \beta < 1 \,,
\label{CDF}
\ee
where $M_\beta(\zeta)$ is the Mainardi/Wright function
defined in \eqref{def:M}.
\end{thm}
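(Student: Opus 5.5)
The plan is to work entirely in the Laplace domain and invert with the Efros formula \eqref{efros}, exactly as was done for \eqref{SActrw2NM}. The starting point is the identity derived just before the statement,
\be
\widetilde{F}^{\rm jw}(s)=s^{\beta-1}\,\widetilde{F}^{\rm jw}_{\rm M}(s^\beta) \,,
\ee
which follows from \eqref{ftildefM} and from $\widetilde{F}^{\rm jw}(s)=\widetilde f(s)/s$. The latter holds because $F^{\rm jw}(0)=0$ and the Laplace transform of a primitive is the transform of the integrand divided by $s$. This identity has exactly the form $v(s)\,\widetilde{\mathcal W}[q(s)]$ with $v(s)=s^{\beta-1}$, $q(s)=s^\beta$ and $\mathcal W=F^{\rm jw}_{\rm M}$.

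Applying \eqref{efros} therefore gives
\be
F^{\rm jw}(t)=\int_0^\infty F^{\rm jw}_{\rm M}(\zeta)\,\mathcal{L}^{-1}\!\left\{s^{\beta-1}\e^{-\zeta s^\beta};t\right\}d\zeta \,.
\ee
The inner inverse transform is read off from the pair already quoted before \eqref{Jf}, namely $\mathcal{L}\{t^{-\beta}M_\beta(\zeta t^{-\beta});s\}=s^{\beta-1}\e^{-\zeta s^\beta}$ \cite[(4.26)]{mainardi_etal-fcaa-2001}. Substituting this kernel and pulling the factor $t^{-\beta}$ outside the integral yields \eqref{CDF}. As an independent check, I would rederive the result by integrating \eqref{Jf} in time: $F^{\rm jw}=J^1 f=J^\beta\,(J^{1-\beta}f)$. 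In the Laplace domain this again reduces to the displayed identity, so the two routes agree.

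The only step needing care is the justification of Efros' formula here. This requires that $F^{\rm jw}_{\rm M}$ be bounded and measurable, which holds because it is a cumulative distribution function with values in $[0,1]$. It also requires that Fubini's theorem can be applied to exchange the $\zeta$-integral with the Laplace integral. This is fine for real $s>0$: $M_\beta\ge0$ is a probability density on $\RR_0^+$ for $0<\beta<1$, so the integrand is nonnegative and bounded by the kernel, whose integral is finite. Uniqueness of the Laplace transform for continuous functions then identifies the two sides of \eqref{CDF} pointwise. As a sanity check, letting $t\to\infty$ gives $F^{\rm jw}(\infty)=\int_0^\infty M_\beta(\eta)\,F^{\rm jw}_{\rm M}(\infty)\,d\eta=1$, because $t^{-\beta}M_\beta(\zeta/t^\beta)$ integrates to $1$ in $\zeta$ and $F^{\rm jw}_{\rm M}(\infty)=1$, consistent with normalisation.
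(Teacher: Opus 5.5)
Your proposal is correct and is essentially the paper's proof: you apply Efros' formula \eqref{efros} to $\widetilde{F}^{\rm jw}(s)=s^{\beta-1}\widetilde{F}^{\rm jw}_{\rm M}(s^\beta)$ with $v(s)=s^{\beta-1}$ and $q(s)=s^\beta$, and you read off the kernel $t^{-\beta}M_\beta(\zeta/t^\beta)$ from the $M_\beta$ Laplace pair quoted before \eqref{Jf}. Your Tonelli-plus-Laplace-uniqueness justification (nonnegative integrand, bounded $F^{\rm jw}_{\rm M}$, continuity of both sides) is a useful addition that the paper leaves implicit; note, though, that your ``independent check'' via $J^\beta J^{1-\beta}$ reduces to the same Laplace identity, so it is not really independent.
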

}
\begin{proof}
By applying Efros formula \eqref{efros} in analogy with 
the pair \eqref{LLM} and \eqref{SActrw2NM},
we have \eqref{CDF}.
\end{proof}
To conclude, we remind the formula 
\cite[(6.3)]{mainardi_etal-fcaa-2003}
\be
t^{-\beta}M_\beta(\zeta/t^{\beta})=
t^{-\nu} \int_0^\infty
M_\eta(\zeta/y^{\eta})
M_\nu(y/t^{\nu}) \, \frac{dy}{y^\eta} \,,
\quad \beta=\nu \eta \,,
\label{MMM}
\ee
and then
\gp{from Theorem \ref{thm:CDF} we have the following
\begin{cor}
\label{cor:CDF2}
The relation between the cumulative density function
with anomalous parameter $\beta$ and that of parameter 
$\eta > \beta$ is:
\be
F^{\rm jw}(t;\beta)=t^{-\nu}\int_0^\infty 
M_\nu(\zeta/t^\nu) F^{\rm jw}(\zeta;\eta) \, d\zeta \,,
\quad \beta=\nu \eta \,.
\label{CDF2}
\ee
\end{cor}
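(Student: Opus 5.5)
The plan is to work in the Laplace domain, where the corollary reduces to a composition of the scaling relations, and then to invert with the Efros formula \eqref{efros}. From the equality used just before Theorem~\ref{thm:CDF}, for any anomalous parameter $\gamma\in(0,1]$ we have
\[
\widetilde{F}^{\rm jw}(s;\gamma)=s^{\gamma-1}\,\widetilde{F}^{\rm jw}_{\rm M}(s^\gamma).
\]
Take $\beta=\nu\eta$ with $0<\nu<1$, so that $\eta>\beta$. Writing $s^\beta=(s^\nu)^\eta$ gives
\[
\widetilde{F}^{\rm jw}(s;\beta)=s^{\nu\eta-1}\widetilde{F}^{\rm jw}_{\rm M}\big((s^\nu)^\eta\big)
=s^{\nu-1}\Big[(s^\nu)^{\eta-1}\widetilde{F}^{\rm jw}_{\rm M}\big((s^\nu)^\eta\big)\Big]
=s^{\nu-1}\,\widetilde{F}^{\rm jw}(s^\nu;\eta).
\]
This is exactly the structure of \eqref{LLM}, with the Markovian function replaced by the one with parameter $\eta$.

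Next, apply the Efros formula \eqref{efros} with $v(s)=s^{\nu-1}$, $q(s)=s^\nu$ and $\mathcal W(\zeta)=F^{\rm jw}(\zeta;\eta)$. Together with the pair $\mathcal{L}\{t^{-\nu}M_\nu(\zeta t^{-\nu});s\}=s^{\nu-1}\e^{-\zeta s^\nu}$ from \cite[(4.26)]{mainardi_etal-fcaa-2001}, this yields \eqref{CDF2} directly, just as Theorem~\ref{thm:CDF} was obtained.

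As a consistency check, and as an alternative route that matches the paper's hint, start from \eqref{CDF} for parameter $\beta$. Substitute \eqref{MMM} for the kernel $t^{-\beta}M_\beta(\zeta/t^\beta)$ and exchange the $\zeta$ and $y$ integrations (Tonelli applies, since $M_\nu,M_\eta\ge0$ and $0\le F_{\rm M}^{\rm jw}\le1$). The inner integral $y^{-\eta}\int_0^\infty M_\eta(\zeta/y^\eta)F^{\rm jw}_{\rm M}(\zeta)\,d\zeta$ is, by \eqref{CDF} with parameter $\eta$, equal to $F^{\rm jw}(y;\eta)$. What remains is $t^{-\nu}\int_0^\infty M_\nu(y/t^\nu)F^{\rm jw}(y;\eta)\,dy$, which is \eqref{CDF2} after renaming $y\to\zeta$.

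The only delicate point is justifying the Efros step, or equivalently the interchange of integrals. It needs the absolute convergence of $\int_0^\infty \e^{-\zeta s^\nu}F^{\rm jw}(\zeta;\eta)\,d\zeta$ for $\mathrm{Re}\,s>0$, which follows from the boundedness of the cumulative distribution function. The case $\eta=1$ is allowed and reproduces Theorem~\ref{thm:CDF}. Beyond that, the argument is bookkeeping of exponents.
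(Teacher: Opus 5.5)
Your proposal is correct. Your second route is exactly the paper's one-line proof: insert \eqref{MMM} into \eqref{CDF}, exchange the integrals, and recognise the inner integral as \eqref{CDF} at parameter $\eta$. Your Tonelli remark supplies the justification the paper leaves implicit.

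Your primary route is genuinely different. Instead of the subordination identity \eqref{MMM} for Mainardi/Wright functions, you compose the Laplace-domain scaling relations to get $\widetilde{F}^{\rm jw}(s;\beta)=s^{\nu-1}\widetilde{F}^{\rm jw}(s^\nu;\eta)$. You then invert once with \eqref{efros}, which amounts to rerunning the proof of Theorem~\ref{thm:CDF} with the pair $(\beta,\eta)$ in place of $(\beta,1)$.

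What your route buys:
\begin{itemize}
\item It is self-contained. It needs only \eqref{efros} and the pair $\mathcal{L}\{t^{-\nu}M_\nu(\zeta t^{-\nu});s\}=s^{\nu-1}\e^{-\zeta s^\nu}$, not the identity quoted from the 2003 paper.
\item It effectively reproves \eqref{MMM}: Laplace-transforming both sides of \eqref{MMM} in $t$ gives $s^{\beta-1}\e^{-\zeta s^\beta}$ on each side.
\item It shows the corollary depends only on the scaling form $\widetilde{F}(s;\gamma)=s^{\gamma-1}\widetilde{F}_{\rm M}(s^\gamma)$, so Corollary~\ref{cor:CDF4} for the (wj)-case follows verbatim.
\end{itemize}

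The paper's route stays entirely in the time domain. Once \eqref{MMM} is granted it is shorter, and it presents \eqref{CDF2} as a direct consequence of Theorem~\ref{thm:CDF} rather than as a parallel result.
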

\begin{proof}
By using \eqref{MMM} in \eqref{CDF} we have \eqref{CDF2}.
\end{proof}
\begin{remark}
An interesting special case of formula \eqref{CDF2} is obtained when $\nu=1/2$,
that is
\be
F^{\rm jw}(t;\beta/2)=2\int_0^\infty 
\frac{\e^{-\zeta^2/(4t)}}{\sqrt{4 \pi t}} F^{\rm jw}(\zeta;\beta) \, d\zeta \,,
\label{CDF3}
\ee
where the identity $M_{1/2}(\zeta)=\e^{-\zeta^2/4}/\sqrt{\pi}$ has 
been used.
\end{remark}
}

A large number of other formulae can be obtained by making use
of the existing results for the Mainardi/Wright function, e.g., 
\cite{mainardi_etal-fcaa-2001,mainardi-2010,
mainardi_etal-ijde-2010,pagnini-fcaa-2013}.

\subsection{The (wj)-case}
The FRT density function in the (wj)-case 
\eqref{Pwj} can be studied in the Laplace domain.
In particular,
by using \eqref{eq5} and \eqref{eq6}, from \eqref{Pwj} we have
\begin{eqnarray}
\widetilde{\mathcal{P}}^{\rm wj}(s)
&=&\widetilde{\psi}(s) \widetilde{f}(s)
\nonumber \\
&=&2-\widetilde{\psi}(s)-2\sqrt{1-\widetilde{\psi}(s)} \,.
\label{eq6wj}
\end{eqnarray}
It can be checked that the normalisation condition 
$\widetilde{\mathcal{P}}^{\rm wj}(0)=1$ is met and 
by applying the initial and the final value theorems it holds
\begin{subequations}
\be
\mathcal{P}^{\rm wj}(0)=\lim_{s \to +\infty} s \widetilde{\mathcal{P}}^{\rm wj}(s) = 0 \,,
\label{initialwj}
\ee
\be
\mathcal{P}^{\rm wj}(+\infty)=\lim_{s \to 0} s \widetilde{\mathcal{P}}^{\rm wj}(s) = 0 \,,
\label{finalwj}
\ee
\end{subequations}
because $\widetilde{\psi}(0)=1$ and $\widetilde{\psi}(+\infty)=0$.
\gp{In analogy with Remark \ref{rmk:1}, we underline 
also in this case that,
since Theorem \ref{thm:1} lays behind the derivation of \eqref{eq6wj}, 
we have the following
\begin{remark}
\label{rmk:2}
The density function of the FRT of a symmetric CTRW in the (wj)-case
is independent of the jump-size distribution.
\end{remark}}
Formulae \eqref{initialwj} and \eqref{finalwj} hold both 
in the Markovian and non-Markovian case.
More explicitly, from \eqref{def:nonMarkov} we have 
that the density function of the FRT for a CTRW 
(wj)-type is
\be
\widetilde{\mathcal{P}}^{\rm wj}(s)
=\frac{1+2s^\beta - 2 \sqrt{s^\beta}\sqrt{s^\beta+1}}{s^\beta + 1} \,.
\label{ftildewj}
\ee
When $\beta=1$ the system is Markovian and therefore, 
by denoting with $\mathcal{P}_{\rm M}^{\rm wj}(t)$ the 
corresponding density of the FRT, in analogy with \eqref{ftildefM},
in the Laplace domain it holds
\be
\widetilde{\mathcal{P}}^{\rm wj}(s)
=\widetilde{\mathcal{P}}_{\rm M}^{\rm wj}(s^\beta) \,,
\label{ftildefMwj}
\ee 
such that, by applying the Efros formula \eqref{efros},
it can be inverted and we have 
\gp{
\begin{thm}
Let $P^{\rm wj}(t)$ and $P^{\rm wj}_{\rm M}(t)$ 
be the density functions of the FRT in the $(wj)$-case
for a symmetric CTRW  
with Mittag--Leffler \eqref{def:nonMarkov2} and 
exponentially distributed waiting-times, respectively, 
then it holds
\be
\mathcal{P}^{\rm wj}(t) =\int_0^\infty 
\zeta^{-1/\beta} \ell_\beta(t/\zeta^{1/\beta}) 
\, \mathcal{P}_{\rm M}^{\rm wj}(\zeta) \, d\zeta \,,
\quad 0 < \beta < 1 \,,
\label{Pwj2}
\ee
which recasts formula \eqref{f} for the (jw)-case.
\end{thm}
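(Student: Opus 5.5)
The plan is to mirror the proof of formula \eqref{f} for the (jw)-case: invert the Laplace-domain identity \eqref{ftildefMwj} with the Efros formula \eqref{efros}, taking $v(s)=1$ and $q(s)=s^\beta$.

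First, check \eqref{ftildefMwj} itself. Setting $\beta=1$ in \eqref{ftildewj} gives
\[
\widetilde{\mathcal{P}}^{\rm wj}_{\rm M}(s)=\frac{1+2s-2\sqrt{s}\sqrt{s+1}}{1+s}.
\]
Replacing $s$ by $s^\beta$ reproduces \eqref{ftildewj} exactly. Equivalently, this follows from \eqref{eq6wj}, since $\lpsi(s)=\lpsiM(s^\beta)$ by \eqref{def:Markov} and \eqref{def:nonMarkov}.

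Second, write the Markovian transform as a Laplace integral and substitute $s^\beta$:
\[
\widetilde{\mathcal{P}}^{\rm wj}_{\rm M}(s^\beta)=\int_0^\infty \e^{-\zeta s^\beta}\,\mathcal{P}^{\rm wj}_{\rm M}(\zeta)\,d\zeta .
\]
This is the Efros setting \eqref{efros} with $\mathcal{W}=\mathcal{P}^{\rm wj}_{\rm M}$, $v=1$ and $q(s)=s^\beta$. We therefore need $\mathcal{L}^{-1}\{\e^{-\zeta s^\beta};t\}$ for fixed $\zeta>0$.

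Third, compute that inverse by a scaling argument. By definition $\mathcal{L}\{\ell_\beta(t);s\}=\e^{-s^\beta}$. The substitution $t=\zeta^{1/\beta}u$ gives
\[
\mathcal{L}\{\zeta^{-1/\beta}\ell_\beta(t/\zeta^{1/\beta});s\}=\e^{-(\zeta^{1/\beta}s)^\beta}=\e^{-\zeta s^\beta}.
\]
Inserting this into \eqref{efros} yields \eqref{Pwj2}.

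The only delicate point is justifying the exchange of Laplace inversion and $\zeta$-integration implicit in the Efros formula. Here it is harmless: $\mathcal{P}^{\rm wj}_{\rm M}\ge 0$ is integrable, being a probability density with $\widetilde{\mathcal{P}}^{\rm wj}_{\rm M}(0)=1$. The kernel $\zeta^{-1/\beta}\ell_\beta(t/\zeta^{1/\beta})$ is also a nonnegative density in $t$. Hence Tonelli's theorem allows computing the Laplace transform of the right-hand side of \eqref{Pwj2} termwise, giving $\int_0^\infty \e^{-\zeta s^\beta}\mathcal{P}^{\rm wj}_{\rm M}(\zeta)\,d\zeta$. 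Uniqueness of the Laplace transform then concludes the proof.

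As a consistency check, \eqref{Pwj2} could alternatively be obtained from \eqref{f} via \eqref{Pwj}. However, $\psi$ does not subordinate as a convolution factor in a simple way, so the direct route above is preferred.
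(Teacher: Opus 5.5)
Your proof is correct and follows the paper's own route. You obtain $\widetilde{\mathcal{P}}^{\rm wj}(s)=\widetilde{\mathcal{P}}^{\rm wj}_{\rm M}(s^\beta)$ and invert it by the Efros formula \eqref{efros} with $v(s)=1$, $q(s)=s^\beta$, exactly as for \eqref{f}. Your scaling computation of $\mathcal{L}^{-1}\{\e^{-\zeta s^\beta};t\}$ and the Tonelli/uniqueness argument simply make explicit what the paper leaves implicit.

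Your closing aside is slightly off, though. Since $\lpsi(s)=\lpsiM(s^\beta)$ (the paper later verifies that $\psi$ is exactly the subordinate of $\e^{-t}$), and subordination acts multiplicatively on Laplace transforms, the route through \eqref{Pwj} and \eqref{f} is just as immediate.
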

}

In the Markovian case with $\beta=1$, 
from \eqref{ftildewj} we have
\be
\widetilde{\mathcal{P}}^{\rm wj}_{\rm M}(s) =
\frac{1}{1+s} + 2s \left[\frac{1}{1+s} - \frac{1}{\sqrt{s(1+s)}}\right] \,, 
\ee
and, by remembering the properties of the Laplace transform
\gp{together with} the Laplace transform pair \eqref{SActrw}-\eqref{SActrw2},
we can calculate the exact result
\be
\mathcal{P}_{\rm M}^{\rm wj}(t)=
\e^{-t/2}\left[
I_0(\zeta) - \frac{dI_0(\zeta)}{d\zeta}\right]_{\zeta=t/2}
-\e^{-t} \,,
\quad \mathcal{P}_{\rm M}^{\rm wj}(0)=0 \,,
\label{PMexplictwj}
\ee 
\gp{such that} the integral formula \eqref{Pwj2} is fully determined.

We can quantify the difference between the two cases (jw) and (wj).
In fact, first by comparing \eqref{fexplicit} and 
\eqref{PMexplictwj} we have
\gp{
\begin{prop}
The relationship between the density functions of the FRT 
with exponentially distributed waiting-times
in the $(wj)$- and $(jw)$-case is
\be
\mathcal{P}_{\rm M}^{\rm wj}(t)=2\mathcal{P}_{\rm M}^{\rm jw}(t)
-
\e^{-t/2}\left[\frac{dI_0}{d\zeta} - \frac{d^2 I_0}{d\zeta^2}
\right]_{\zeta=t/2} - \e^{-t}
\,.
\ee 
\end{prop}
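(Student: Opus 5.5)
The plan is a direct comparison of the two explicit Markovian formulae already obtained, namely \eqref{fexplicit} for $\mathcal{P}_{\rm M}^{\rm jw}(t)=f_{\rm M}(t)$ (recall \eqref{Pjw}) and \eqref{PMexplictwj} for $\mathcal{P}_{\rm M}^{\rm wj}(t)$. No new analytic input is needed beyond these two closed forms; the proposition is an algebraic rearrangement of them.

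First, I take the second (final) line of \eqref{fexplicit},
\[
\mathcal{P}_{\rm M}^{\rm jw}(t)=\frac{1}{2}\,\e^{-t/2}\left[I_0(\zeta)-\frac{d^2I_0(\zeta)}{d\zeta^2}\right]_{\zeta=t/2},
\]
and multiply it by $2$ to get
\[
2\mathcal{P}_{\rm M}^{\rm jw}(t)=\e^{-t/2}\left[I_0-\frac{d^2I_0}{d\zeta^2}\right]_{\zeta=t/2}.
\]
Next, I add and subtract $\e^{-t/2}\,d^2I_0/d\zeta^2$ inside the bracket of \eqref{PMexplictwj}:
\[
\mathcal{P}_{\rm M}^{\rm wj}(t)=\e^{-t/2}\left[I_0-\frac{d^2I_0}{d\zeta^2}\right]_{\zeta=t/2}-\e^{-t/2}\left[\frac{dI_0}{d\zeta}-\frac{d^2I_0}{d\zeta^2}\right]_{\zeta=t/2}-\e^{-t}.
\]
The first term is exactly $2\mathcal{P}_{\rm M}^{\rm jw}(t)$, which yields the claimed identity.

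As a consistency check, I would use the first line of \eqref{fexplicit}, which says $f_{\rm M}=\mathcal{P}_{\rm M}^{\rm wj}+\frac{d}{dt}\mathcal{P}_{\rm M}^{\rm wj}$. This is equivalent to $\widetilde{\mathcal{P}}^{\rm wj}_{\rm M}(s)=\widetilde f_{\rm M}(s)/(1+s)$ from \eqref{eq6wj} and \eqref{def:Markov}. Differentiating \eqref{PMexplictwj} with the chain rule ($d\zeta/dt=1/2$) gives
\[
\frac{d}{dt}\mathcal{P}_{\rm M}^{\rm wj}=-\frac{1}{2}\e^{-t/2}\bigl[I_0-2I_0'+I_0''\bigr]_{\zeta=t/2}+\e^{-t}.
\]
Adding this to $\mathcal{P}_{\rm M}^{\rm wj}$ recovers $\frac12\e^{-t/2}[I_0-I_0'']$, confirming the second line of \eqref{fexplicit} on which the comparison relies.

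Finally, I would check the identity at $t=0$ using \eqref{bessel0}. The right-hand side is $2\cdot\frac14-[0-\frac12]-1=0$, which agrees with $\mathcal{P}_{\rm M}^{\rm wj}(0)=0$.

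The only step requiring care is the chain-rule bookkeeping for $\zeta=t/2$ in the consistency check; the main identity itself is immediate once both closed forms are written side by side.
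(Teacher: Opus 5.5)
Your proof is correct and is essentially the paper's own argument: the paper obtains the proposition simply by comparing \eqref{fexplicit} and \eqref{PMexplictwj}, which is exactly your step of doubling the second line of \eqref{fexplicit} and adding and subtracting $\e^{-t/2}\,d^2I_0/d\zeta^2$ in \eqref{PMexplictwj}. Your extra checks are consistent but not needed for the proof: recovering the second line of \eqref{fexplicit} via $f_{\rm M}=\mathcal{P}^{\rm wj}_{\rm M}+\frac{d}{dt}\mathcal{P}^{\rm wj}_{\rm M}$, which relies on $\mathcal{P}^{\rm wj}_{\rm M}(0)=0$, and the value of both sides at $t=0$.
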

}
\ck{Later}, by using \eqref{f} and \eqref{Pwj2}, \ck{we find}
\begin{eqnarray}
\mathcal{P}^{\rm wj}(t)
&=&2\mathcal{P}^{\rm jw}(t)
- \int_0^\infty \zeta^{-1/\beta} \ell_\beta(t/\zeta^{1/\beta}) 
\, \e^{-\zeta} \, d\zeta \qquad \qquad \nonumber \\
& & \qquad 
- \int_0^\infty \zeta^{-1/\beta} \ell_\beta(t/\zeta^{1/\beta}) 
\, \e^{-\zeta/2}\left[\frac{dI_0}{d\chi} - \frac{d^2 I_0}{d\chi^2}
\right]_{\chi=\zeta/2} \, d\zeta \,. 
\end{eqnarray} 
The second term in the r.h.s. can be solved and put in a more clear form.
In fact, by using some formula concerning
L\'evy density function \cite[(4.26)]{mainardi_etal-fcaa-2001},
Mainardi/Wright function \cite[(4.32)]{mainardi_etal-fcaa-2001}
and Mittag--Leffler function \cite[(1.45)]{mainardi-2010},
we have the following equalities
\begin{eqnarray}
\int_0^\infty \zeta^{-1/\beta} \ell_\beta(t/\zeta^{1/\beta}) 
\, \e^{-\zeta s^\beta} \, d\zeta 
&=& 
\frac{\beta}{t}\int_0^\infty \frac{\zeta}{t^\beta} M_\beta(\zeta/t^\beta) 
\, \e^{-\zeta s^\beta} \, d\zeta \nonumber \\
&=& 
- \frac{s^{1-\beta}}{t}\frac{\partial}{\partial s}
\int_0^\infty M_\beta(y) \, \e^{-t^\beta s^\beta y} \, dy \nonumber \\
&=& 
- \frac{s^{1-\beta}}{t}\frac{\partial}{\partial s}
E_\beta(-t^\beta s^\beta) \nonumber \\
&=& t^{\beta-1} E_{\beta,\beta}(-t^\beta s^\beta) \,.
\end{eqnarray}
\ck{Thus,} by remembering \eqref{def:nonMarkov2} and by setting $s=1$, it holds
\be
\int_0^\infty \zeta^{-1/\beta} \ell_\beta(t/\zeta^{1/\beta}) 
\, \e^{-\zeta} \, d\zeta = \psi(t) \,,
\ee 
such that
\gp{
\begin{prop}
The relationship between the density functions of the FRT 
with Mittag--Leffler distributed waiting-times \eqref{def:nonMarkov2}
in the $(wj)$- and $(jw)$-case is
\begin{eqnarray}
\!\!
\mathcal{P}^{\rm wj}(t)
&=&2\mathcal{P}^{\rm jw}(t)
- \psi(t) \qquad \qquad \nonumber \\
& & \qquad 
- \int_0^\infty \zeta^{-1/\beta} \ell_\beta(t/\zeta^{1/\beta}) 
\, \e^{-\zeta/2}\left[\frac{dI_0}{d\chi} - \frac{d^2 I_0}{d\chi^2}
\right]_{\chi=\zeta/2} \, d\zeta \,. 
\end{eqnarray} 
\end{prop}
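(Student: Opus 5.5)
The plan is to start from the Markovian identity of the preceding Proposition and transport it to the Mittag--Leffler setting through the subordination integrals \eqref{f} and \eqref{Pwj2}.

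Write $K_\beta(t,\zeta)=\zeta^{-1/\beta}\ell_\beta(t/\zeta^{1/\beta})$ for the common kernel. By \eqref{Pjw}, $\mathcal{P}^{\rm jw}=f$ and $\mathcal{P}^{\rm jw}_{\rm M}=f_{\rm M}$. Hence \eqref{f} gives $\mathcal{P}^{\rm jw}(t)=\int_0^\infty K_\beta(t,\zeta)\,\mathcal{P}^{\rm jw}_{\rm M}(\zeta)\,d\zeta$, and \eqref{Pwj2} gives the analogous representation of $\mathcal{P}^{\rm wj}$.

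First, I insert the Markovian identity
\[
\mathcal{P}_{\rm M}^{\rm wj}(\zeta)=2\mathcal{P}_{\rm M}^{\rm jw}(\zeta)-\e^{-\zeta/2}\left[\frac{dI_0}{d\chi}-\frac{d^2I_0}{d\chi^2}\right]_{\chi=\zeta/2}-\e^{-\zeta}
\]
into \eqref{Pwj2}. I then integrate term by term against $K_\beta(t,\zeta)$, which is legitimate by linearity of the integral. The first term reproduces $2\mathcal{P}^{\rm jw}(t)$ by \eqref{f}. The Bessel term stays exactly as the last integral in the statement. The exponential term gives $\int_0^\infty K_\beta(t,\zeta)\e^{-\zeta}\,d\zeta$.

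Second, I identify this last integral with $\psi(t)$. The quickest route is to take the Laplace transform in $t$. Since $\mathcal{L}\{K_\beta(t,\zeta);s\}=\e^{-\zeta s^\beta}$, because $\ell_\beta$ has transform $\e^{-s^\beta}$ and scales as stated, Fubini gives $\int_0^\infty \e^{-\zeta(1+s^\beta)}\,d\zeta=1/(1+s^\beta)=\lpsi(s)$ by \eqref{def:nonMarkov}. This is equivalent to the chain of equalities the paper displays just before the statement, which yields $t^{\beta-1}E_{\beta,\beta}(-t^\beta)=\psi(t)$ by \eqref{def:nonMarkov2}. Either route suffices, and I would cite the displayed computation at $s=1$.

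The only delicate point is justifying the term-by-term splitting, because each piece must be separately integrable against the kernel. The term $\mathcal{P}^{\rm jw}_{\rm M}$ is a probability density, and $\e^{-\zeta}$ is trivially fine. The Bessel bracket behaves like $\e^{\zeta/2}\zeta^{-3/2}$ times constants, so after multiplication by $\e^{-\zeta/2}$ it stays bounded and decays like $\zeta^{-3/2}$. This is enough because $\int_0^\infty K_\beta(t,\zeta)\,d\zeta=t^{\beta-1}/\Gamma(\beta)$ is finite for $t>0$. With integrability checked, the identity follows by collecting the three terms.
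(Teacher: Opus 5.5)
Your proposal is correct and follows the paper's own route: you insert the Markovian identity into the subordination formula \eqref{Pwj2}, recover $2\mathcal{P}^{\rm jw}(t)$ from \eqref{f}, and identify $\int_0^\infty \zeta^{-1/\beta}\ell_\beta(t/\zeta^{1/\beta})\,\e^{-\zeta}\,d\zeta$ with $\psi(t)$. Your direct Laplace check, $\int_0^\infty \e^{-\zeta(1+s^\beta)}\,d\zeta=1/(1+s^\beta)$, is a shorter substitute for the paper's M-Wright/Mittag--Leffler chain evaluated at $s=1$, and your integrability remarks justify the term-by-term splitting, which the paper leaves implicit.
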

}

From \eqref{ftildewj} we can derive the asymptotic behaviour for
large elapsed times and, in analogy with \eqref{fasymptotic}, 
it is
\be
\mathcal{P}^{\rm wj}(t) \sim \frac{1}{2^{2/\beta}} \,
\ell_{\beta/2}(t/2^{2/\beta}) \sim \frac{2}{t^{\beta/2+1}} \,,
\quad t \to +\infty \,,
\label{fasymptoticwj}
\ee
from which it follows that the mean FRT also in the (wj)-case
is infinite both in Markovian
and non-Markovian setting and the right tail of the density function
decreases slower in non-Markovian systems.

The analogies between the (jw)- and the (wj)-case include
also formulae \eqref{Jf} and \eqref{Jf0} 
that now read as follows
\be
J^{1-\beta}\mathcal{P}^{\rm wj}(t)=\int_0^\infty 
\frac{1}{t^\beta}M_\beta(\zeta/t^{\beta}) 
\mathcal{P}^{\rm wj}_{\rm M}(\zeta) \, d\zeta
\,, 
\label{Jfwj}
\ee
\be
\left. J^{1-\beta}\mathcal{P}^{\rm wj}(t) \right|_{t=0}
= \mathcal{P}^{\rm wj}_{\rm M}(0)=0 \,,
\label{Jf0wj}
\ee
\gp{as well as the analogue of Theorem \ref{thm:CDF}, that is
\begin{thm}
\label{thm:CDFwj}
Let $F^{\rm wj}(t)$ and $F^{\rm wj}_{\rm M}(t)$ 
be the cumulative density functions of the FRT in the $(jw)$-case
for a symmetric CTRW  
with Mittag--Leffler \eqref{def:nonMarkov2} and 
exponentially distributed waiting-times, respectively, 
then it holds
\be
F^{\rm wj}(t)=t^{-\beta}\int_0^\infty 
M_\beta(\zeta/t^\beta) F^{\rm wj}_{\rm M}(\zeta) \, d\zeta \,,
\quad 0 < \beta < 1 \,.
\label{CDFwj}
\ee
where $M_\beta(\zeta)$ is the Mainardi/Wright function
defined in \eqref{def:M}.
\end{thm}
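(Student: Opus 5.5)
The plan mirrors the proof of Theorem \ref{thm:CDF}, now using \eqref{ftildefMwj} in place of \eqref{ftildefM}. The argument has three steps.

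\emph{Step 1: Laplace transform of the CDF.} First I pass to the Laplace transform of the cumulative density function $F^{\rm wj}(t)=\int_0^t \mathcal{P}^{\rm wj}(\zeta)\,d\zeta$. By the integration rule this gives $\widetilde{F}^{\rm wj}(s)=\widetilde{\mathcal{P}}^{\rm wj}(s)/s$.

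\emph{Step 2: Rewriting in Markovian form.} Inserting the scaling relation \eqref{ftildefMwj}, which follows directly from \eqref{ftildewj}, I rewrite this as
\be
\widetilde{F}^{\rm wj}(s)=\frac{\widetilde{\mathcal{P}}^{\rm wj}_{\rm M}(s^\beta)}{s}
=s^{\beta-1}\,\frac{\widetilde{\mathcal{P}}^{\rm wj}_{\rm M}(s^\beta)}{s^\beta}
=s^{\beta-1}\,\widetilde{F}^{\rm wj}_{\rm M}(s^\beta)\,.
\ee
The last equality again uses the integration rule, now for the Markovian CDF $F^{\rm wj}_{\rm M}$. This has exactly the structure of \eqref{LLM}.

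\emph{Step 3: Efros inversion.} I apply the Efros formula \eqref{efros} with $v(s)=s^{\beta-1}$, $q(s)=s^\beta$ and $\mathcal{W}=F^{\rm wj}_{\rm M}$. This reduces the inversion to computing $\mathcal{L}^{-1}\{s^{\beta-1}\e^{-\zeta s^\beta};t\}$. By the Laplace pair $\mathcal{L}\{t^{-\beta}M_\beta(\zeta t^{-\beta});s\}=s^{\beta-1}\e^{-\zeta s^\beta}$ from \cite[(4.26)]{mainardi_etal-fcaa-2001}, already used for \eqref{SActrw2NM} and \eqref{Jf}, this kernel equals $t^{-\beta}M_\beta(\zeta/t^\beta)$. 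Integrating against $F^{\rm wj}_{\rm M}(\zeta)$ yields \eqref{CDFwj}.

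\emph{Main obstacle: justifying Efros.} The only step needing care is the applicability of the Efros formula: exchanging the $\zeta$-integral with the Laplace inversion. I would justify this by Fubini. Since $0\le F^{\rm wj}_{\rm M}\le 1$ and $M_\beta\ge 0$ with $\int_0^\infty t^{-\beta}M_\beta(\zeta/t^\beta)\,d\zeta=1$, the right-hand side of \eqref{CDFwj} is absolutely convergent and bounded by $1$. Its Laplace transform can therefore be computed by integrating term-wise, which recovers $s^{\beta-1}\widetilde{F}^{\rm wj}_{\rm M}(s^\beta)$ for ${\rm Re}\{s\}>0$. Uniqueness of the Laplace transform then gives the identity.
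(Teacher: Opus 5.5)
Your proposal is correct and follows the paper's route: the paper proves this theorem the same way as Theorem~\ref{thm:CDF}, by Efros inversion of $\widetilde{F}^{\rm wj}(s)=s^{\beta-1}\widetilde{F}^{\rm wj}_{\rm M}(s^\beta)$ with $v(s)=s^{\beta-1}$, $q(s)=s^\beta$, which produces the kernel $t^{-\beta}M_\beta(\zeta/t^\beta)$. Your Tonelli-plus-uniqueness check, using $0\le F^{\rm wj}_{\rm M}\le 1$ and $\int_0^\infty t^{-\beta}M_\beta(\zeta/t^\beta)\,d\zeta=1$, usefully justifies the Efros step that the paper leaves implicit, and you are right to read the statement's ``$(jw)$-case'' as the intended $(wj)$-case.
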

\begin{proof}
Same proof of Theorem \ref{thm:CDF}.
\end{proof}
Moreover, 
from Theorem \ref{thm:CDFwj} and by using \eqref{MMM},
we have
\begin{cor}
\label{cor:CDF4}
The relation between the cumulative density function
with anomalous parameter $\beta$ and that of parameter 
$\eta > \beta$ is:
$$
F^{\rm wj}(t;\beta)=t^{-\nu}\int_0^\infty 
M_\nu(\zeta/t^\nu) F^{\rm wj}(\zeta;\eta) \, d\zeta \,,
\quad \beta=\nu \eta \,,
$$
which is the analogue of formula \eqref{CDF2}.
\end{cor}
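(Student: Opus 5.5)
The plan is to insert the subordination identity \eqref{MMM} into the representation of Theorem \ref{thm:CDFwj}, in exact analogy with how Corollary \ref{cor:CDF2} follows from Theorem \ref{thm:CDF}.

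First, I apply Theorem \ref{thm:CDFwj} with parameter $\beta$:
\[
F^{\rm wj}(t;\beta)=\int_0^\infty t^{-\beta}M_\beta(\zeta/t^\beta)\,F^{\rm wj}_{\rm M}(\zeta)\,d\zeta .
\]
Second, I replace the kernel $t^{-\beta}M_\beta(\zeta/t^\beta)$ by the right-hand side of \eqref{MMM} with $\beta=\nu\eta$. This gives a double integral over $\zeta$ and $y$. Third, I exchange the order of integration, which is legitimate by Tonelli since all factors are non-negative: $M_\nu, M_\eta\ge0$ on $\RR_0^+$ for parameters in $(0,1)$, and $F^{\rm wj}_{\rm M}\in[0,1]$. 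The inner integral is then
\[
y^{-\eta}\int_0^\infty M_\eta(\zeta/y^\eta)\,F^{\rm wj}_{\rm M}(\zeta)\,d\zeta ,
\]
which by Theorem \ref{thm:CDFwj} applied with parameter $\eta$ equals $F^{\rm wj}(y;\eta)$. What remains is $t^{-\nu}\int_0^\infty M_\nu(y/t^\nu)F^{\rm wj}(y;\eta)\,dy$, and renaming $y\to\zeta$ gives the claim.

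The only real obstacle is the range of admissibility. Theorem \ref{thm:CDFwj} is stated for $0<\eta<1$, and \eqref{MMM} requires $0<\nu,\eta<1$. Hence $\beta=\nu\eta<\eta$, matching the hypothesis $\eta>\beta$.

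To include $\eta=1$, I would note that for $\eta=1$ the kernel $y^{-1}M_1(\zeta/y)$ degenerates to $\delta(\zeta-y)$. In that case $F^{\rm wj}(y;1)=F^{\rm wj}_{\rm M}(y)$, and the statement reduces to Theorem \ref{thm:CDFwj} itself.

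As a cross-check, I can work in the Laplace domain. There $\widetilde{F}^{\rm wj}(s;\beta)=s^{\beta-1}\widetilde{F}^{\rm wj}_{\rm M}(s^\beta)$, so
\[
\widetilde F^{\rm wj}(s;\beta)=s^{\nu-1}\,\widetilde F^{\rm wj}(s^\nu;\eta).
\]
Efros' formula \eqref{efros} with $v(s)=s^{\nu-1}$ and $q(s)=s^\nu$, together with $\mathcal{L}\{t^{-\nu}M_\nu(\zeta t^{-\nu});s\}=s^{\nu-1}\e^{-\zeta s^\nu}$, then gives the result directly. This route avoids \eqref{MMM} entirely.
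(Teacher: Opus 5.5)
Your proof is correct and follows the paper's route: the paper obtains this corollary by inserting the subordination identity \eqref{MMM} into the representation of Theorem~\ref{thm:CDFwj}, exactly as for Corollary~\ref{cor:CDF2}, and your Tonelli argument and your identification of the inner integral with $F^{\rm wj}(y;\eta)$ make explicit the steps the paper leaves implicit. Your Laplace-domain cross-check, via $\widetilde F^{\rm wj}(s;\beta)=s^{\nu-1}\widetilde F^{\rm wj}(s^\nu;\eta)$ and Efros' formula, is also valid, and it is essentially the computation that establishes \eqref{MMM} itself.
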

\begin{proof}
Same proof of Corollary \ref{cor:CDF2}.
\end{proof}
\begin{remark}
An interesting special case of Corollary \ref{cor:CDF4} 
is obtained when $\nu=1/2$, that is
$$
F^{\rm wj}(t;\beta/2)=2\int_0^\infty 
\frac{\e^{-\zeta^2/(4t)}}{\sqrt{4 \pi t}} F^{\rm wj}(\zeta;\beta) \, d\zeta \,,
$$
which is the analogue of formula \eqref{CDF3}.
\end{remark}
}

\begin{acknowledgement}[title={Acknowledgments}]
The authors are grateful to the organisers of the
{\it First Lucia Workshop on Fractional Calculus and Processes},
V{\"a}xj{\"o}, 9--13 December 2024, for their kind invitation and
for the pleasant time that inspired this research.
\end{acknowledgement}

\begin{funding}
This research is supported by the Basque Government through 
the BERC 2022--2025 program and
by the Ministry of Science and Innovation: BCAM Severo Ochoa accreditation CEX2021-001142-S / MICIN / AEI / 10.13039/501100011033.
\end{funding}

\bibliographystyle{vmsta2-mathphys}
\bibliography{all}

@string{fcaa = "Fract. Calc. Appl. Anal."}

@string{prl= "Phys. Rev. Lett."}

@string{pre= "Phys. Rev. E"}

@book{mainardi-2010,
        author = {F. Mainardi},
        title = {Fractional Calculus and Waves in Linear Viscoelasticity},
        publisher = {Imperial College Press},
        year = {2010},
        location = {London}
}

@article{mainardi_etal-ijde-2010,
author="F. Mainardi and A. Mura and G. Pagnini",
title="The {M-Wright} function in time-fractional diffusion processes: {A} tutorial survey",
journal="Int. J. Differ. Equations",
year="2010",
volume="2010",
number="",
pages="104505",
note=""
}

@article{klafter_etal-pw-2005,
author="J. Klafter and I. M. Sokolov", 
title="Anomalous diffusion spread its wings",
journal="Physics World",
year="2005",
volume="18",
number="",
pages="29--32",
note=""
}

@article{sokolov_etal-pt-2002,
author="I. M. Sokolov and J. Klafter and A. Blumen", 
title="Fractional kinetics",
journal="Physics Today",
year="2002",
volume="55",
number="",
pages="48--54",
note=""
}

@article{mainardi_etal-fcaa-2001,
     author="F. Mainardi and Yu. Luchko and G. Pagnini", 
     title="The fundamental solution of the space-time 
             fractional diffusion equation",
    journal=fcaa,
       year="2001",
     volume="4",
     number="2",
      pages="153--192",
      month="",
       note=""
}

@article{mainardi_etal-fcaa-2003,
     author="F. Mainardi and G. Pagnini and R. Gorenflo",
     title="Mellin transform and subordination laws in fractional diffusion processes",
    journal=fcaa,
       year="2003",
     volume="6",
     number="4",
      pages="441--459",
      month="",
       note=""
}

@article{pagnini-fcaa-2013,
     author="G. Pagnini",
     title="The {M-Wright} function as a generalization of the {Gaussian} density for fractional diffusion processes",
    journal=fcaa,
       year="2013",
     volume="16",
     number="2",
      pages="436--453",
      month="",
       note=""
}

@book{feller-1971,
     author="W. Feller",
     editor="",
      title="An Introduction to Probability Theory and its Applications", 
  publisher="Wiley",
       year="1971",
     volume="2",
     number="",
     series="",
    address="New York",
    edition="second",
      month="",
       note=""
}

@article{montroll_etal-jmp-1965,
	Author = "E. W. Montroll and G. H. Weiss",
	Journal = "J. Math. Phys.",
	Number = "2",
	Pages = "167--181",
	Title = "Random Walks on Lattices. {II}",
	Volume = "6",
	Year = "1965"
	}

@article{hilfer_etal-pre-1995,
	author = {R. Hilfer and L. Anton},
	title = {Fractional master equations and fractal time random walks},
	journal = {Phys. Rev. E}, 
	volume = {51}, 
	number = {2}, 
	year = {1995}, 
	pages = {R848--R851}
}

@article{mainardi_etal-pa-2000,
	author = {F. Mainardi and M. Raberto and R. Gorenflo and E. Scalas},
	title = {Fractional calculus and continuous-time finance {II}: the waiting-time distribution},
	journal = {Physica A}, 
	volume = {287}, 
	number = {3--4},
	year = {2000}, 
	pages = {468--481}
}

@incollection{gorenflo_etal-cism-1997,
     author="R. Gorenflo and F. Mainardi",
      title="Fractional calculus: 
integral and differential equations of fractional order",
  booktitle="Fractals and Fractional Calculus in Continuum Mechanics", 
     editor="A. Carpinteri and F. Mainardi",
  publisher="Springer--Verlag", 
address="Wien and New York", 
       year="1997",
     volume="",
     number="",
      pages="223--276",
      chapter="",
      month="",
       note=""
}

@article{chechkin_etal-jpa-2003, 
year="2003",
volume="36", 
pages="L537--L544",
journal="J. Phys. A: Math. Gen.",
title="First passage and arrival time densities 
for {L\'evy} flights and the failure of the method of images",
author="A. V. Chechkin and R. Metzler and V. Y. Gonchar
and J. Klafter and L. V. Tanatarov",
}

@incollection{chechkin_etal-anotrans-2008,
author="A. V. Chechkin and R. Metzler 
and J. Klafter and V. {Yu.} Gonchar",
title="Introduction to the Theory of {L\'evy} Flights",
editor="R. Klages and G. Radons and I. M. Sokolov",
booktitle="Anomalous Transport: Foundations and Applications",
publisher="Wiley--VCH Verlag GmbH \& Co. KGaA",
year="2008", 
address="Weinheim", 
     volume="",
     number="",
      pages="129--162",
      chapter="5",
      month="",
       note=""
}

@article{zeng_etal-fcaa-2014,
author="C. Zeng and {Y.Q.} Chen",
title="Optimal random search, fractional dynamics and fractional calculus",
journal=fcaa,
year="2014",
volume="17", 
number="",
pages="321--332",
note=""
}

@article{greenwood-ab-1980,
author="P. J. Greenwood",
title="Mating systems, philopatry and dispersal
in birds and mammals",
journal="Anim. Behav.",    
volume="28",
pages="1140--1162",
year="1980"
}

@article{giuggioli_etal-jmb-2012,
journal="J. Math. Biol.",
year="2012",
volume="64",
pages="647--656",
title="Linking animal movement to site fidelity",
author="L. Giuggioli and F. Bartumeus"
}

@article{majumdar_etal-jpa-2017,
journal="J. Phys. A: Math. Theor.",
volume="50",
year="2017",
pages="465002",
title="Survival probability of random walks and {L\'evy} flights
on a semi-infinite line",
author="S. N. Majumdar and P. Mounaix and G. Schehr"
}

@article{padash_etal-jpa-2019,
journal="J. Phys. A: Math. Theor.",
volume="52",
year="2019",
pages="454004",
title="First-passage properties of asymmetric {L\'evy flights}",
author="A. Padash and A. V. Chechkin and B. Dybiec and
I. Pavlyukevich and B. Shokri and R. Metzler",
}

@article{bray_etal-ap-2013,
author="A. J. Bray and S. N. Majumdar and G. Schehr",
year="2013",
title="Persistence and first-passage properties in nonequilibrium systems", 
journal="Adv. Phys.", 
volume="62",
issue="3", 
pages="225--361"
}

@book{estrada_kanwal-2000,
title="Singular Integral Equations",
author="R. Estrada and R. P. Kanwal",
publisher="Springer Science+Business Media, LLC",
address="New York",
year="2000"
}

@book{redner-2001,
author="S. Redner",
year="2001",
title="A Guide to First-Passage Processes",
address="Cambridge",
publisher="Cambridge University Press"
}

@article{dahlenburg_etal-jpa-2022,
title="Exact calculation of the mean first-passage time of 
continuous-time random walks by nonhomogeneous 
{Wiener--Hopf} integral equations",
author="M. Dahlenburg and G. Pagnini",
journal="J. Phys. A: Math. Theor.",
volume="55",
pages="505003",
year="2022",
note=""
}

@article{majumdar-pa-2010,
journal="Physica A",
volume="389",
year="2010",
pages="4299--4316",
title="Universal first-passage properties of discrete-time random walks 
and {L\'evy} flights on a line: {Statistics} of the global maximum and records",author="S. N. Majumdar"
}

@article{ivanov-aa-1994,
author="V. V. Ivanov", 
journal="Astron. Astrophys.",
volume="286",
year="1994",
pages="328--337",
title="Resolvent method: exact solutions of half-space transport problems
by elementary means"
}

@incollection{frisch_etal-1994,
author="U. Frisch and H. Frisch", 
title="Universality of escape from a half-space for symmetrical random walks",
booktitle="L\'evy Flights and Related Topics in Physics: 
Proceedings of the International Workshop Held at Nice, France, 
27--30 June 1994",
editor="M. F. Shlesinger and G. M. Zaslavsky and U. Frisch",
series="Lecture Notes in Physics", 
address="Berlin, Heidelberg",
publisher="Springer--Verlag", 
year="1994", 
pages="262--268"
}

@article{majumdar_etal-jsp-2006,
journal="J. Stat. Phys.",
volume="122", 
year="2006",
title="Unified solution of the expected maximum of a discrete time 
random walk and the discrete flux to a spherical trap",
author="S. N. Majumdar and A. Comtet and R. M. Ziff",
pages="833--856"
}

@article{zwanzig-jsp-1983,
title="From classical dynamics to continuous time random walks",
author="R. Zwanzig",
journal="J. Stat. Phys.",
volume="30",
year="1983",
pages="255--262"
}

@article{pollaczek-1952,
author="F. Pollaczek", 
title="Fonctions caract{\'e}ristiques de certaines r{\'e}partitions 
d{\'e}finies au moyen de la notion d'ordre. {Application} 
{\`a} la th{\'e}orie des attentes",
journal="C. R. Acad. Sci. Paris",
volume="234",
year="1952", 
pages="2334--2336",
}

@article{spitzer-1957,
title="The {Wiener--Hopf} equation whose kernel is a probability density",
author="F. Spitzer",
journal="Duke Math. J.",
volume="24",
pages="327--343",
year="1957"
}

@article{spitzer-1960,
title="The {Wiener--Hopf} equation whose kernel is a probability density. {II}",
author="F. Spitzer",
journal="Duke Math. J.",
volume="27",
pages="363--372",
year="1960"
}

@article{spitzer-1956,
journal="Trans. Am. Math. Soc.",
volume="82", 
pages="323--339",
year="1956",
title="A combinatorial lemma and its application to probability theory",
author="F. Spitzer"
}

@article{koren_etal-pa-2007,
journal="Physica A",
volume="379",
year="2007",
pages="10--22",
title="On the first passage time and leapover properties of {L\'evy} motions",
author="T. Koren and A. V. Chechkin and J. Klafter"
}

@article{koren_etal-prl-2007,
journal=prl,
volume="99", 
pages="160602",
year="2007",
title="Leapover lengths and first passage time statistics for 
{L\'evy} flights",
author="T. Koren and M. A. Lomholt and A. V. Chechkin and J. Klafter
and R. Metzler"
}

@article{sparre-andersen-1949,
title="On the number of positive sums of random variables",
author="E. {Sparre Andersen}",
journal="Scand. Actuar. J.",
volume="1",
year="1949",
pages="27--36"
}

@article{sparre-andersen-1953,
title="On the fluctuations af sums of random variables",
author="E. {Sparre Andersen}",
journal="Math. Scand.",
volume="1",
year="1953",
pages="263--285"
}

@article{sparre-andersen-1954a,
title="Remarks to the paper: 
On the fluctuations of sums of random variables",
author="E. {Sparre Andersen}",
journal="Math. Scand.",
volume="2",
year="1954",
pages="193--194"
}

@article{sparre-andersen-1954b,
title="On the fluctuations af sums of random variables {II}",
author="E. {Sparre Andersen}",
journal="Math. Scand.",
volume="2",
year="1954",
pages="195--223"
}

@article{lindley-1952,
author="D. V. Lindley", 
title="The theory of queues with a single server", 
journal="Math. Proc. Cambridge Philos. Soc.",
volume="48",
year="1952", 
pages="277--289"
}

@Inbook{aurzada_etal-2015,
author="F. Aurzada and T. Simon",
title="Persistence Probabilities and Exponents",
booktitle="L{\'e}vy Matters V: Functionals of L{\'e}vy Processes",
year="2015",
publisher="Springer International Publishing",
address="Cham",
pages="183--224"
}

@article{frisch_etal-1975,
author="U. Frisch and H. Frisch", 
year="1975",
title="{Non-LTE} transfer. $\sqrt{\epsilon}$ revisited", 
journal="Mon. Not. R. Astr. Soc.",
volume="173", 
pages="167--182"
}

@article{dahlenburg_etal-prsa-2023,
journal="Proc. R. Soc. A",
volume="479", 
year="2023",
title="{Sturm--Liouville} systems for the survival probability in first-passage time problems",
author="M. Dahlenburg and G. Pagnini",
pages="20230485"
}

@article{artuso_etal-pre-2014,
journal="Phys. Rev. E",
volume="89", 
pages="052111",
year="2014",
title="{Sparre--Andersen} theorem with spatiotemporal correlations",
author="R. Artuso and G. Cristadoro and M. {Degli Esposti}
and G. Knight"
}

@article{lacroix_etal-jpa-2020,
title="Universal survival probability for a correlated random walk 
and applications to records",
author="B. {Lacroix-A-Chez-Toine} and F. Mori",
year="2020",
journal="J. Phys. A: Math. Theor.",
volume="53",
pages="495002"
}

@article{dybiec_etal-jpa-2016,
title="To hit or to pass it over--remarkable transient behavior 
of first arrivals and passages for {L{\'e}vy} flights in finite domains",
author="B. Dybiec and E. {Gudowska--Nowak} and A. Chechkin",
year="2016",
journal="J. Phys. A: Math. Theor.",
volume="49",
pages="504001"
}

@article{efros-1935,
author="A. M. Efros", 
title="Some applications of operational calculus in analysis",
journal="Mat. Sb.", 
volume="42",
year="1935",
pages="699--705"
}

@article{wlodarski-1952,
author="L. W{\l{}}odarski", 
title="Sur une formule de {Efros}",
journal="Stud. Math.",
year="1952", 
volume="13", 
pages="183--187"
}

@article{shkilev-pre-2024,
journal=pre,
volume="110", 
pages="024139",
year="2024",
title="First-passage and first-arrival problems in 
continuous-time random walks: {Beyond} the diffusion approximation",
author="V. P. Shkilev"
}

@article{ding_etal-pre-1995,
title="Distribution of the first return time in {fractional Brownian motion} 
and its application to the study of on-off intermittency",
author="M. Ding and W. Yang",
journal=pre,
year="1995",
volume="52",
pages="207--213"
}

@article{colaiori_etal-pre-2004,
journal=pre,
volume="69", 
pages="041105",
year="2004",
title="Average trajectory of returning walks",
author="F. Colaiori and A. Baldassarri and C. Castellano"
}

@incollection{jeon_etal-book-2014,
title="First passage behavior of multi-dimensional {fractional Brownian motion} 
and application to reaction phenomena",
author="J.-H. Jeon and A. V. Chechkin and R. Metzler",
booktitle="First-Passage Phenomena and Their Applications",
publisher="World Scientific", 
year="2014", 
pages="175--202"
}

@article{gruda_etal-jsm-2025,
title="The joint distribution of first return times and 
of the number of distinct sites visited by a {1D} random walk before 
returning to the origin",
author="M. Gruda and O. Biham and E. Katzav and R. K{\"u}hn",
journal="J. Stat. Mech.",
year="2025",
pages="013203"
}

@article{kostinski_etal-ajp-2016,
author="S. Kostinski and A. Amir",
year="2016",
title="An elementary derivation of first and last return times of 
{1D} random walks",
journal="Am. J. Phys.",
volume="84",
pages="57--60"
}

\end{document}